\documentclass[submission,copyright,creativecommons]{eptcs}

\usepackage[utf8]{inputenc}
\usepackage[australian]{babel}
\usepackage[final]{microtype}
\usepackage{booktabs}
\usepackage{pifont}  

\newcommand\fitpar{\looseness=-1}

\usepackage{underscore}

\usepackage{cite}
\usepackage{amsmath,amssymb,amsthm}
\usepackage{mathtools,thmtools}
\usepackage[capitalise]{cleveref}
\usepackage[all]{foreign}
\usepackage[capitalise]{cleveref}
\usepackage{enumitem}

\usepackage{tikz}

\newlist{claims}{enumerate*}{1}
\setlist[claims]{label={(\roman*)}}

\usepackage{temporallogic}

\newlist{rules}{description}{2}
\setlist[rules]{format={\Rule}}

\DeclareMathAlphabet{\mathcal}{OMS}{cmsy}{m}{n}

\declaretheorem{theorem}
\declaretheorem[sibling=theorem]{proposition}
\declaretheorem[sibling=theorem]{lemma}
\declaretheorem[sibling=theorem]{definition}

\newcommand\DeclareSymbol[2][\mathcal]{\csdef{#2}{\ensuremath{#1{#2}}\xspace}}

\DeclareSymbol[\mathsf]{V}
\DeclareSymbol[\mathsf]{AP}
\DeclareSymbol[\mathsf]{K}
\DeclareSymbol[\mathbb]{N}
\DeclareSymbol[\mathbb]{Z}
\DeclareSymbol[\mathcal]{C}
\DeclareSymbol[\mathcal]{E}
\DeclareSymbol[\mathsf]{wwf}

\DeclarePairedDelimiter\seq{\langle}{\rangle}
\DeclarePairedDelimiter\pair{\langle}{\rangle}
\DeclarePairedDelimiter\set{\{}{\}}
\let\time\undefined
\DeclareMathOperator\time{\mathsf{time}}
\DeclareMathOperator\delay{\mathsf{d}}

\renewcommand\bar[1]{\overline{#1}}

\DeclareDocumentCommand\ticked{}{\text{\ding{51}}}
\DeclareDocumentCommand\crossed{}{\text{\ding{55}}}

\raggedbottom
\allowdisplaybreaks

\usepackage{lipsum}
\usepackage{xcolor}

\newcommand\suitcase{
  \begin{tikzpicture}
    \fill[rounded corners=1pt] (0,0) rectangle (1em,0.75em);
    \draw[white] (0.2em,0) -- (0.2em,0.75em);
    \draw[white] (0.8em,0) -- (0.8em,0.75em);
    \draw[line width=0.5pt, rounded corners=0.1pt]
      (0.34em, 0.75em) -- (0.34em, 0.85em) -| (0.66em, 0.75em);
  \end{tikzpicture}
}

\newcommand\blthanks[2][]{%
  \begingroup
  \renewcommand\thefootnote{}\footnotetext{%
    \makebox[0pt][r]{%
      \scriptsize \raisebox{-0.5pt}{#1}\hspace{0.2em}%
    }#2%
  }%
  \addtocounter{footnote}{-1}%
  \endgroup
}

%
%

\title{%
  One-Pass and Tree-Shaped Tableau Systems\\
   for TPTL and TPTL$_{\mbox{\large b}}$+Past%
}

\author{%
  Luca Geatti, Nicola Gigante, and Angelo Montanari
  \institute{University of Udine, Italy}
  \email{\{geatti.luca,gigante.nicola\}@spes.uniud.it}
  \email{angelo.montanari@uniud.it}
  \and Mark Reynolds
  \institute{The University of Western Australia}
  \email{mark.reynolds@uwa.edu.au}
}

\begin{document}
\maketitle

\begin{abstract}
  In this paper, we propose a novel one-pass and tree-shaped tableau method for Timed
  Propositional Temporal Logic and for a bounded variant of its extension with past
  operators. Timed Propositional Temporal Logic (\TPTL) is a \emph{real-time} temporal logic,
  with an \EXPSPACE-complete satisfiability problem, which has been successfully
  applied to the verification of real-time systems. In contrast to \LTL, adding
  past operators to \TPTL makes the satisfiability problem for the resulting
  logic  (\TPTLP) \emph{non-elementary}. In this paper, we devise a one-pass and
  tree-shaped tableau for both \TPTL and \emph{bounded} \TPTLP (\TPTLbP), a syntactic
  restriction introduced to encode timeline-based planning problems, which
  recovers the \EXPSPACE-complete complexity. The tableau systems for \TPTL and
  \TPTLbP are presented in a unified way, being very similar to each other,
  providing a common skeleton that is then specialised to each logic. In doing
  that, we characterise the semantics of \TPTLbP in terms of a purely syntactic
  fragment of \TPTLP, giving a translation that embeds the former into the
  latter. Soundness and completeness of the system are proved fully. In particular,
  we give a greatly simplified model-theoretic completeness proof, which
  sidesteps the complex combinatorial argument used by known proofs for the
  one-pass and tree-shaped tableau systems for \LTL and \LTLP.
\blthanks[\suitcase]{
  These results were developed mainly while A.\ Montanari was on leave at the
  \emph{Stockholm University}, and N.\ Gigante was on leave at the
  \emph{University of Western Australia}, supported by the \emph{AIxIA Outgoing
  mobility grant 2017}. The work was partially supported by the Italian GNCS
  project \emph{Formal methods for verification and synthesis of discrete and
  hybrid systems} (N.~Gigante and A.~Montanari), the PRID project \emph{ENCASE -
  Efforts in the uNderstanding of Complex interActing SystEms} (N.~Gigante and
  A.~Montanari), and the Australian Research Council funding--DP140103365 (M.
  Reynolds).
}
\end{abstract}


\section{Introduction}
\label{sec:introduction}

Among the reasoning methods used to decide the satisfiability of logical formulae,
\emph{tableau methods} are among the earliest proposed and most studied solutions~\cite{Tableau99}. Classic tableau methods for logics of the linear time, such as, for instance, Linear
Temporal Logic (\LTL)~\cite{LichtensteinP00,MannaPnueli95}, build a graph
structure which is then traversed to look for possible models of the formula. Despite
being a useful theoretical tool, such graph-shaped tableau systems are not efficient in
practice as they need to build and traverse a huge graph structure in multiple passes.
Various ways of overcoming such a limitation have been proposed in the literature, including
incremental~\cite{KestenMMP93} and single pass techniques~\cite{Schwendimann98}.
Recently, a one-pass and tree-shaped tableau system for \LTL has been
devised~\cite{Reynolds16a}, which does not build any huge preliminary
structure and, thanks to its pure rule-based tree-search structure, proved to be
amenable to efficient implementation and easy
parallelisation~\cite{McCabeDanstedR17,BertelloGMR16}. Recent work also suggested
that its modular structure makes it possible to easily extend it to other linear time
logics (the extension to \LTL with \emph{past operators} is described in~\cite{GiganteMR17}).

Timed Propositional Temporal Logic (\TPTL) is a linear time logic, which extends
\LTL with the ability to express real-time properties of systems and
computations~\cite{AlurH94}. The greater expressive power of \TPTL is reflected
in the computational complexity of its satisfiability problem, which is
\EXPSPACE-complete. Originally proposed as a formal tool for the verification of
real-time systems, it recently found interesting applications in the area of
artificial intelligence, to encode a meaningful class of \emph{timeline-based
planning} problems~\cite{DellaMonicaGMSS17}.
This and other application scenarios benefit from/require the use of \emph{past
operators}, which allow the logic to compactly predicate about events 
in the past of the current time point. However, in contrast to the case of \LTL,
where past operators can be supported without harm, adding them to \TPTL greatly
increases the complexity of its satisfiability problem, which becomes
\emph{non-elementary}~\cite{AlurH93}.
For this reason, \emph{bounded TPTL with Past} (\TPTLbP) has been
introduced~\cite{DellaMonicaGMSS17}, which supports past operators, but suitably
restricts their use in order to recover an \EXPSPACE-complete satisfiability
problem. While initially introduced as a specific tool to encode planning
problems, \TPTLbP is interesting by itself, since it enables the use of past
operators in a fairly natural way.

In this paper, we exploit the extensibility of the aforementioned tableau system
to provide a one-pass and tree-shaped tableau method for \TPTL and \TPTLbP.
We present both tableau systems, which are very similar, in a unified way by first
\begin{claims}
  \item factoring out the common structure, and then
  \item showing how to specialise it in the case of
        \TPTL (future-only) and \TPTLbP (bounded) formulae, thus obtaining a
        one-pass and tree-shaped tableau system for both logics.
        To show how the tableau for \TPTLbP formulae works,
  \item we characterise the semantics of the logic in terms of a guarded
        fragment of the full \TPTLP logic, showing how to translate \TPTLbP into
        this fragment. Furthermore,
  \item the completeness of the two tableau systems is shown by a greatly simplified
        proof exploiting a new model-theoretic technique which sidesteps the
        complex combinatorial argument used by known proofs for \LTL and
        \LTLP. 
\end{claims}

The tableau systems presented here for \TPTL and \TPTLbP are truly extensions of
the previously known ones for \LTL and \LTLP, respectively, in the sense that
their rules and behaviour are exactly the same as before when applied to pure
\LTL/\LTLP formulae, further confirming the modular and extensible nature of the
one-pass tree-shaped system.

The paper is structured as follows. Syntax and semantics of \TPTL, \TPTLP, and
\TPTLbP are illustrated in \cref{sec:tptl}. Then, \cref{sec:tableau} describes
the tableau systems for \TPTL and \TPTLbP. It first introduces the general skeleton
common to both, and then it shows how to tailor it to \TPTL and \TPTLbP. Finally,
soundness and completeness of both systems are proved in \cref{sec:proofs}.
\cref{sec:conclusions} concludes with  some considerations on the obtained
results and open problems.


\section{Timed Propositional Temporal Logic}
\label{sec:tptl}

This section defines syntax and semantics of \TPTL~\cite{AlurH94}, \TPTLP~\cite{AlurH94}, and \TPTLbP~\cite{DellaMonicaGMSS17}.
Let $\AP=\{p,q,r,\ldots\}$ be a set of \emph{proposition letters} and
$\V=\{x,y,z,\ldots\}$ be a set of \emph{variables}. A \TPTLP formula $\phi$
over $\AP$ and $\V$ is recursively defined as follows:
\begin{align*}
  \phi := p & {} \mid \neg\phi_1 \mid \phi_1\lor\phi_2 \mid x.\phi_1
                 \mid x \le y + c \mid x \le c \mid x \equiv_m y + c \\
 & {} \mid \X \phi_1 \mid \phi_1\U\phi_2 \mid \phi_1\R\phi_2
                 \mid \Y \phi_2 \mid \phi_1\S\phi_2 \mid \phi_1\T\phi_2,
\end{align*}
where $p\in\AP$, $\phi_1$ and $\phi_2$ are \TPTLP formulae, $x,y\in\V$, $c\in\Z$, $m\in\N$, and $\equiv_m$ is the congruence modulo the constant $m$. Formulae of the form $x.\phi$ are called \emph{freeze
quantifications}, while those of the forms $x\le y + c$, $x\le c$, and $x\equiv_m
y$ are called \emph{timing constraints}. Standard logical and temporal shortcuts,
e.g., $\true$ for $p\lor\neg p$, for some $p\in\AP$, $\false$ for $\neg\true$,
$\phi_1\land\phi_2$ for $\neg(\neg\phi_1\lor\neg\phi_2)$, $\F\phi$ for $\true\U\phi$,
$\G\phi$ for $\neg\F\neg\phi$, and $\P\phi$ for $\true\S\phi$, as well as constraint
shortcuts, e.g., $x \le y$ for $x \le y + 0$, $x > y$ for $\neg(x \le y)$, and $x=y$ for
$\neg(x<y)\land\neg(y<x)$, are used. A formula $\phi$ is \emph{closed} if each occurrence of
a variable $x$ is enclosed by a subformula of the form $x.\psi$. As for \LTLP,
the temporal operators can be partitioned in \emph{future}
(\emph{tomorrow} $\X$, \emph{until} $\U$, and \emph{release} $\R$) and
\emph{past} (\emph{yesterday} $\Y$, \emph{since} $\S$, and
\emph{triggered} $\T$) ones. \TPTL is the fragment of \TPTLP where only future operators are used.

\TPTLP formulae are interpreted over \emph{timed state sequences}, i.e.,
structures $\rho=(\sigma,\tau)$, where $\sigma=\seq{\sigma_0,\sigma_1,\ldots}$
is an infinite sequence of states $\sigma_i\in2^\AP$, for $i \ge 0$, and
$\tau=\seq{\tau_0,\tau_1,\ldots}$ is an infinite sequence of \emph{timestamps}
$\tau_i\in\N$, for $i \ge 0$, such that (i) $\tau_{i+1}\ge\tau_i$
(\emph{monotonicity}), and (ii) for all $t\in\N$, there is some $i\ge0$ such that
$\tau_i\ge t$ (\emph{progress}).
Formally, the semantics of \TPTLP is defined as follows. Functions $\xi:\V\to\N$
mapping variables to timestamps are called \emph{environments}. A timed state
sequence $\rho$ \emph{satisfies} a formula $\phi$ at position $i\ge0$, with
environment $\xi$, written $\rho\models^i_\xi\phi$, if (and only if):

\begin{conditions}
\item $\rho \models^i_\xi p$                & $p \in \sigma_i$;
\item $\rho \models^i_\xi \phi_1\lor\phi_2$ &
       $\rho \models^i_\xi\phi_1$ or $\rho \models^i_\xi \phi_2$; 
\item $\rho \models^i_\xi \neg\phi_1$       & $\rho \not\models^i_\xi\phi_1$;
\item $\rho \models^i_\xi x \le y + c$      &  $\xi(x) \le \xi(y) + c$;
\item $\rho \models^i_\xi x \le c$          &  $\xi(x) \le c$;
\item $\rho \models^i_\xi x \equiv_m y + c$ &  $\xi(x) \equiv_m \xi(y) + c$;
\item $\rho \models^i_\xi x.\phi_1$         &
      $\rho \models^i_{\xi'} \phi_1$ where $\xi'=\xi[x\leftarrow \tau_i]$;
\item $\rho \models^i_\xi \X\phi_1$       & $\rho\models^{i+1}_\xi \phi_1$;
\item $\rho \models^i_\xi \phi_1\U\phi_2$ &
      there exists $j\ge i$ such that $\rho \models^j_\xi \phi_2$ and
      $\rho \models^k_\xi \phi_1$ for all $i\le k < j$;
\item $\rho \models^i_\xi \phi_1\R\phi_2$ & either $\rho\models^j_\xi\phi_2$
      for all $j\ge i$, or there exists a $k\ge i$ such that
      $\rho\models^k_\xi\phi_1$ and $\rho\models^j_\xi\phi_2$ for all
      $i\le j\le k$;
\item $\rho \models^i_\xi \Y\phi_1$       &
      $i > 0$ and $\rho^{i-1}\models_\xi\phi_1$;
\item $\rho \models^i_\xi \phi_1\S\phi_2$ &
      there exists $j\le i$ such that $\rho \models^j_\xi \phi_2$, and
      $\rho \models^k_\xi \phi_1$ for all $j < k \le i$;
\item $\rho \models^i_\xi \phi_1\T\phi_2$ & either $\rho\models^j_\xi\phi_2$
      for all $0\le j\le i$, or there exists a $k\le i$ such that
      $\rho\models^k_\xi\phi_1$ and $\rho\models^j_\xi\phi_2$ for all
      $i\ge j\ge k$,
\end{conditions}
\noindent where $\xi'=\xi[x\leftarrow\tau_i]$ is the environment equal to $\xi$
possibly excepting
$\xi'(x)=\tau_i$. A \emph{closed} formula $\phi$ is satisfied by
a timed state sequence $\rho$, written $\rho\models\phi$, if $\rho \models^0_\xi \phi$,
for any $\xi$. \TPTL and \TPTLP can thus be viewed as (\emph{metric}) extensions of,
respectively, \LTL and \LTLP with
the \emph{freeze quantifier}
$x.\phi$, that allows one to bind a variable to  the \emph{timestamp} of the
current state, which can then be compared with other variables by the timing
constraints.
In contrast to \LTL and \LTLP, which
both have a \PSPACE-complete satisfiability problem, adding past operators to
\TPTL causes a complexity blowup: the satisfiability problem is
\EXPSPACE-complete for \TPTL, but \emph{non-elementary} for
\TPTLP~\cite{AlurH94}.

The unconstrained use of past operators in these timed logics is thus impossible
in practice. However, there are many scenarios where referring to the past may
be needed, and thus it is useful to search for possible ways of adding past
operators to \TPTL while retaining a (relatively) practicable complexity.
\TPTLbP has been introduced
to encode a meaningful class of timeline-based planning problems, whose synchronisation
rules can interchangeably refer to the future or the past~\cite{DellaMonicaGMSS17}.
The syntax of \TPTLbP is similar to that of \TPTLP, the only difference being that
each temporal operator is subscripted with a bound which constrains the visibility of
the operator. Formally, a \TPTLbP formula $\phi$ over $\AP$ and $\V$ is
recursively defined as follows:
\begin{align*}
  \phi := p & {} \mid \neg\phi_1 \mid \phi_1\lor\phi_2 \mid x.\phi_1
                 \mid x \le y + c \mid x \le c \mid x \equiv_m y + c\\
            & {} \mid \X_w \phi_1 \mid \wX_w \phi_1 \mid \phi_1\U_w\phi_2
                 \mid \phi_1\R_w\phi_2 
            \mid \Y_w \phi_2 \mid \wY_w \phi_2 \mid \phi_1\S_w\phi_2 \mid
                 \phi_1\T_w\phi_2,
\end{align*}
where $w\in\N\cup\{+\infty\}$, $p\in\AP$, $\phi_1, \phi_2$
are \TPTLbP formulae, $x,y\in\V$, $m\in\N$, and $c\in\Z$. The bound on any
temporal operator can be $w=+\infty$ (or omitted) only if applied to a
\emph{closed} formula.
This restriction limits any temporal modality (including future ones) to look
only as far as their bound. As it will be shown later, this implies that when
interpreting any timing constraint, such as, e.g., $x\le y + c$, the timestamps
$x$ and $y$ can be distant, at most, an amount of time which is exponential in
the size of the formula.

Formally, the semantics of \TPTLbP is defined as follows. Let $\rho$ be a timed state
sequence and let $\xi$ be an environment. We say that $\rho$ satisfies a \TPTLbP formula
$\phi$ at position $i\ge0$ with environment $\xi$, written $\rho\models^i_\xi\phi$,
if (and only if):

\begin{conditions}
\item $\rho \models^i_\xi \X_w\phi_1$       &
      $\tau_{i+1}-\tau_i\le w$ and $\rho\models^{i+1}_\xi \phi_1$;
\item $\rho \models^i_\xi \wX_w\phi_1$       &
      $\tau_{i+1}-\tau_i\le w$ implies $\rho\models^{i+1}_\xi \phi_1$;
\item $\rho \models^i_\xi \phi_1\U_w\phi_2$ &
      there exists $j\ge i$ such that: (i) $\tau_j-\tau_i\le w$,
      (ii) $\rho \models^j_\xi \phi_2$,\newline
      and (iii) $\rho \models^k_\xi \phi_1$ for all $i\le k < j$;
\item $\rho \models^i_\xi \phi_1\R_w\phi_2$ & either (i) $\tau_j-\tau_i\le w$
      implies $\rho\models^j_\xi\phi_2$ for all $j\ge i$, or
      (ii) there exists\newline
      $k\ge i$ such that $\tau_k-\tau_i\le w$ and $\rho\models^k_\xi\phi_1$,
      and $\rho\models^j_\xi\phi_2$ for all $i\le j\le k$;
\item $\rho \models^i_\xi \Y_w\phi_1$       &
      $i > 0$, $\tau_i-\tau_{i-1}\le w$, and $\rho \models^{i-1}_\xi\phi_1$;
\item $\rho \models^i_\xi \wY_w\phi_1$       &
      $i > 0$ and $\tau_i-\tau_{i-1}\le w$ imply $\rho \models^{i-1}_\xi\phi_1$;
\item $\rho \models^i_\xi \phi_1\S_w\phi_2$ &
      there exists $j\le i$ such that: (i) $\tau_i-\tau_j\le w$,
      (ii) $\rho \models^j_\xi \phi_2$,\newline
      and (iii) $\rho \models^k_\xi \phi_1$ for all $j < k \le i$;
\item $\rho \models^i_\xi \phi_1\T_w\phi_2$ & either (i) $\tau_i-\tau_j\le w$
      implies $\rho\models^j_\xi\phi_2$ for all $0\le j\le i$, or (ii) there
      exists $k\le i$ such that $\tau_i-\tau_k\le w$ and $\rho\models^k_\xi\phi_1$,
      and $\rho\models^j_\xi\phi_2$ for all $k\le j\le i$;
\item \nocondition{same semantics as \TPTLP for the remaining operators.}
\end{conditions}

In addition to the bounded versions of all the temporal
operators of \TPTLP, \TPTLbP includes a \emph{weak} version of both the
\emph{tomorrow} and \emph{yesterday} ones. While the formula $\X_w\phi$
(resp., $\Y_w\phi$) require the next (resp., previous) state to be distant at
most $w$ time steps and to satisfy $\phi$, the \emph{weak tomorrow}
(resp., \emph{yesterday}) operator in a formula of the form $\wX\phi$
(resp., $\wY\phi$), requires the next (resp., previous) state to satisfy
$\phi$ \emph{only if} such a state exists and its distance is at most $w$. The weak
tomorrow and yesterday operators are introduced as \emph{duals} of the standard ones,
in such a way that $\neg\X_w\phi\equiv\wX_w\neg\phi$ and $\neg\wX_w\phi\equiv\X_w\neg\phi$
(and similarly for the yesterday ones).
This ensures that each temporal modality has its own negated dual (such as the
\emph{until}/\emph{release} and \emph{since}/\emph{triggered} pairs), so that
any \TPTLbP formula can be put into \emph{negated normal form}, where negations
are only applied to proposition letters and timing constraints. The existence of a
negated normal form for \TPTLbP formulae will play an important role in the definition
of the tableau system (see \cref{sub:tptlbp-tableau}).


\section{The tableau systems for \TPTL and \TPTLbP}
\label{sec:tableau}

This section describes the one-pass and tree-shaped tableau systems for \TPTL and
\TPTLbP, that respectively extend those for \LTL and \LTLP presented in
\cite{GiganteMR17,DellaMonicaGMSS17}. Soundness and completeness of the systems
are proved in \cref{sec:proofs}. The two systems are very similar, differing only
in specific parts and sharing the vast majority of their workings. Hence, a
common skeleton is first described, making some assumptions that will then be
fulfilled for the two specific logics.

\subsection{The common skeleton}
\label{subsec:skeleton}

The parts in common between the two tableau systems will be presented as if they
were supposed to handle \TPTLP formulae. \TPTL is a proper fragment of \TPTLP,
and \TPTLbP, as it will be shown later, can be fully embedded in a proper guarded
fragment of \TPTLP. Hence, both tableaux do indeed handle \TPTLP formulae, albeit
of a specific kind. We will mention the specific logics when stating results
that are not proved for the full \TPTLP logic.

W.l.o.g, we may assume formulae to be in \emph{negated normal form}, which is
guaranteed to exist for formulae of both logics. As shown in \cite{AlurH94} for
\TPTL and in \cite{DellaMonicaGMSS17} for \TPTLbP, w.l.o.g., we can also restrict
ourselves to models with a bound on the maximum \emph{temporal} distance between
two subsequent states.

\begin{proposition}[$\delta$-bounded models~\cite{AlurH94,DellaMonicaGMSS17}]
  \label{prop:delta-bounded-models}
  Let $\phi$ be a closed \TPTL or \TPTLbP formula. A model
  $\rho=\pair{\sigma,\tau}$ of $\phi$ is said to be \emph{$\delta$-bounded}, for
  some $\delta\ge0$, if $\tau_{i+1}-\tau_i\le\delta$ for all $i\ge0$. Then,
  it holds that $\phi$ is satisfiable if and only if there exists some
  $\delta_{\phi}\ge0$ such that $\phi$ has a $\delta_{\phi}$-bounded model.
\end{proposition}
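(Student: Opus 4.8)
The plan is to prove the two implications separately. The right-to-left direction is immediate: a $\delta_\phi$-bounded model is, in particular, a model, so if $\phi$ has one then $\phi$ is satisfiable. All the work lies in the left-to-right direction, where, starting from an arbitrary model $\rho=\pair{\sigma,\tau}$ of $\phi$, I would build a $\delta_\phi$-bounded model $\rho'=\pair{\sigma,\tau'}$ that keeps the very same state sequence $\sigma$ and the same positions, replacing only the timestamps $\tau$ by a \emph{contracted} sequence $\tau'$ in which no two consecutive timestamps are more than $\delta_\phi$ apart.

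The guiding observation is that $\phi$ can inspect the timestamps only through its timing constraints $x\le y+c$, $x\le c$ and $x\equiv_m y+c$ (and, for \TPTLbP, through the extra conditions $\tau_{i+1}-\tau_i\le w$ and $\tau_j-\tau_i\le w$ built into its bounded operators), and that each of these is insensitive to the exact magnitude of a timestamp once it exceeds the constants occurring in $\phi$. Accordingly, I would let $C$ be the largest absolute value among the integer constants $c$ and the finite bounds $w$ occurring in $\phi$, let $M$ be the least common multiple of the moduli $m$ occurring in its congruence constraints (with $M=1$ if there are none), and set $\delta_\phi=C+M$. I would then contract $\tau$ gap by gap: keeping $\tau'_0=\tau_0$, for each $i$ I leave the gap $\tau_{i+1}-\tau_i$ unchanged whenever it is at most $C+M$, and otherwise replace it by the unique value in $(C,C+M]$ congruent to it modulo $M$. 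Every gap of $\tau'$ is then at most $C+M=\delta_\phi$, so $\rho'$ is $\delta_\phi$-bounded by construction; every contraction subtracts a positive multiple of $M$, so all timestamps keep their residue modulo $M$; and a short check that $\tau'$ is nondecreasing and still diverges shows that $\tau'$ is a legitimate timed state sequence.

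To conclude $\rho'\models\phi$, I would establish the invariant that, for all positions $i<j$, (i) $\tau'_j-\tau'_i\equiv\tau_j-\tau_i\pmod M$; (ii) $\tau'_j-\tau'_i=\tau_j-\tau_i$ when $\tau_j-\tau_i\le C$, and $\tau'_j-\tau'_i>C$ otherwise; and (iii) $\tau'_i=\tau_i$ when $\tau_i\le C$ and $\tau'_i>C$ when $\tau_i>C$. These hold because a difference not exceeding $C$ spans only gaps each at most $C$, none of which is ever contracted, whereas a difference exceeding $C$ either spans only untouched gaps, summing to the same value, or spans a contracted gap, which on its own already exceeds $C$; the same dichotomy, measured from $\tau'_0=\tau_0$, yields (iii). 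Given (i)--(iii), every timing constraint receives the same truth value under $\rho$ and $\rho'$ whenever the relevant variables are frozen at corresponding positions, and a structural induction then proves $\rho'\models^i_\xi\psi\iff\rho\models^i_{\bar\xi}\psi$ for every $i$ and every pair of environments $\xi,\bar\xi$ that assign, to each variable, the timestamps $\tau'_k$ and $\tau_k$ at one and the same position $k$ (the Boolean and temporal cases being transparent since $\sigma$ and the positions are untouched). Taking $i=0$ and any such pair gives $\rho'\models\phi$.

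The main obstacle I anticipate is the bookkeeping around the absolute constraints $x\le c$ and the freeze quantifier. Unlike the difference and congruence constraints, an atom $x\le c$ is sensitive to the \emph{actual value} of a single frozen timestamp, so the construction must guarantee that contracting earlier gaps never drags a timestamp that was above $C$ down to or below $C$, nor conversely; pinning $\tau'_0=\tau_0$ and contracting only gaps larger than $C$, to values that themselves stay above $C$, is precisely what makes invariant (iii) go through. The companion subtlety is that the freeze case of the induction refreshes the two environments to $\xi[x\leftarrow\tau'_i]$ and $\bar\xi[x\leftarrow\tau_i]$, so one must check that this preserves the ``same position'' correspondence on which (i)--(iii) are invoked. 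For \TPTLbP nothing else is needed: folding its finite operator bounds $w$ into $C$ makes the extra timing conditions in its semantics fall under the very same invariants, and the boundedness of its operators only makes the active comparisons easier to control rather than harder.
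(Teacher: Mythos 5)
Your proof is correct, but there is nothing in the paper to compare it against: the paper does not prove \cref{prop:delta-bounded-models} at all, importing it instead from \cite{AlurH94,DellaMonicaGMSS17} (with only the remark that $\delta_\phi$ is, roughly, the product of the constants occurring in $\phi$). What you have written is essentially a self-contained reconstruction of the argument from those references. Your gap-contraction construction is sound: leaving every gap of size at most $C+M$ untouched and shrinking every larger gap to the unique representative in $(C,C+M]$ of its residue class modulo $M$ preserves exactly what the logic can observe about time, namely residues of timestamp differences modulo every $m$ (since $m\mid M$ and each contraction subtracts a multiple of $M$), exact values of differences up to the threshold $C$, and the side of the threshold both for larger differences and, thanks to pinning $\tau'_0=\tau_0$, for absolute timestamps. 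Since the temporal operators of \TPTLP are purely positional, timestamps enter the semantics only through freeze quantification, timing constraints, and (for \TPTLbP) the operator bounds $w\le C$, so your invariants (i)--(iii) are precisely what the structural induction over corresponding environments needs, and the contracted sequence remains monotone and progressing because contracted gaps are positive. Two minor remarks: your bound $\delta_\phi=C+M$ differs from the ``product of all constants'' bound the paper alludes to, but the proposition only asserts existence of \emph{some} $\delta_\phi$, and a bound computable from $\phi$ is all the tableau's \Rule{Step} rule requires, so your (tighter) choice is perfectly adequate; and your handling of the absolute constraints $x\le c$ via invariant (iii) is welcome here, since the paper only discards such constraints (via Lemma~6 of \cite{AlurH94}) \emph{after} stating this proposition.
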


In \cite{AlurH94,DellaMonicaGMSS17}, it is shown how to compute $\delta_\phi$
starting from the constants appearing in
$\phi$: roughly, $\delta_\phi$
is the product of all the constants in $\phi$.
Similarly, we can assume that no \emph{absolute} timing constraints (those of
the form $x\le c$) are used in the formulae (see Lemma 6 in \cite{AlurH94}).
W.l.o.g., we can also assume that any variable $x$ is used only in one
freeze quantifier in any formula, so that in a formula like $x.\psi$ any occurrence
of $x$ in $\psi$ is free. Since freeze quantifiers can be pushed out of boolean
connectives, when talking about closed formulae we will write them as $x.\psi$,
with explicit reference to the outermost freeze quantifier.

We start by defining an important
building block of the system.

\begin{definition}[Temporal shift]
  \label{def:temporal-shift}
  Let us denote as \wwf the set of all the well-formed \TPTLP formulae. The
  \emph{temporal shift operator} is a function
  $\cdot^\delta:\wwf\times\Z\to\wwf$ such that:
  \begin{enumerate}
  \item \label{def:temporal-shift:semantics}
        for any closed \TPTLP formula $x.\psi$ and any $\delta\in\Z$, timed
        state sequence $\rho$, environment $\xi$, and position $i\ge0$, it holds
        that $\rho\models^i_\xi x.\psi^\delta$ if and only if
        $\rho\models^i_{\xi'}\psi$, where
        $\xi'=\xi[x\leftarrow \tau_i-\delta]$;
  \item \label{def:temporal-shift:convergence}
        there exists $\delta'\in\Z$ such that
        $x.\psi^{\delta'}= x.\psi^{\delta'+i}$ and
        $x.\psi^{-\delta'}= x.\psi^{-\delta'-i}$ for all $i\ge0$.
  \end{enumerate}
\end{definition}

\Cref{def:temporal-shift:semantics} of \cref{def:temporal-shift} states that the truth value of $x.\psi^\delta$, interpreted at the current
state, is the same as that of $\psi$ in the case where $x$ were
bound to the timestamp of a previous state located exactly $\delta$ time units before.
By \cref{def:temporal-shift:convergence}, this transformation has to be defined in such a way that it converges to a fixed point after a large enough amount of shifting, so that for a given $x.\psi$, the number of different formulae of the form $x.\psi^\delta$ is finite. It is not
known whether such an operator exists for full \TPTLP.
Later, we will show how to define it in the cases of \TPTL and
\TPTLbP.

The \emph{closure} of a formula $z.\phi$ contains all the formulae that are
relevant to the satisfaction of $z.\phi$.
\begin{definition}[Closure of a formula]
  \label{def:closure}
  Let $z.\phi$ be a closed \TPTLP formula and let $\cdot^\delta$ be a temporal
  shift operator. Then, the \emph{closure} of $z.\phi$ is the set
  $\C(z.\phi)$ recursively defined as follows:
  \begin{enumerate}
    \item \label{def:closure:base}
          $z.\phi \in\C(z.\phi)$;
    \item \label{def:closure:and}
          if $x.(\psi_1\land\psi_2)\in\C(z.\phi)$, then
          $\set{x.\psi_1,x.\psi_2}\subseteq\C(z.\phi)$;
    \item \label{def:closure:or}
          if $x.(\psi_1\lor\psi_2)\in\C(z.\phi)$, then
          $\set{x.\psi_1,x.\psi_2}\subseteq\C(z.\phi)$;
    \item \label{def:closure:tomorrow}
          if $x.\X\psi\in\C(z.\phi)$, then
          $x.\psi^\delta\in\C(z.\phi)$, for all $\delta\ge0$;
    \item \label{def:closure:yesterday}
          if $x.\Y\psi\in\C(z.\phi)$, then
          $x.\psi^{-\delta}\in\C(z.\phi)$, for all $\delta\ge0$;
    \item \label{def:closure:operators}
          if $x.(\psi_1\circ\psi_2)\in\C(z.\phi)$, where
          $\circ\in\set{\U,\R,\S,\T}$, then
          $\set{x.\psi_1,x.\psi_2,x.\X(\psi_1\circ\psi_2)}\subseteq\C(z.\phi)$;
    \item if $x.y.\psi\in\C(z.\phi)$, then $x.\psi[y/x]\in\C(z.\phi)$.
  \end{enumerate}
\end{definition}

Note that, if $\cdot^\delta$ is a temporal shift operator, then $\C(z.\phi)$ is
a finite set, thanks to \cref{def:temporal-shift:convergence} of
\cref{def:temporal-shift}. Moreover, note that, by construction, every formula
in $\C(z.\phi)$ is a \emph{closed} formula.

Now we can effectively start describing the one-pass and tree-shaped tableau system
for \TPTLP. The tableau for a closed formula $z.\phi$ is a tree where each node
$u$ of the tree is labelled with a \emph{finite} set 
$\Gamma(u)\subseteq\C(z.\phi)$. 
Additionally, a non-negative
integer $\time(u)\in\N$ is associated with each node $u$. Given two nodes $u$ and
$v$, we write $u\le v$ ($u<v$) if $u$ is a (proper) ancestor of $v$. The root
note $u_0$ is labelled by the formula itself, i.e., $\Gamma(u_0)=\set{z.\phi}$, and is
set at $\time(u_0)=0$. The tableau is built top-down, from the root to the
leaves, performing a state-by-state search for a model of the formula where each
accepted branch of the complete tableau corresponds to a satisfying model. At
each step, a set of \emph{expansion rules} is applied to the leaf nodes of the
tree, until no expansion rule can be applied anymore. Each application of an expansion rule
results in the addition of one or more children to the selected node, making the
tree grow and refining the choice of which formulae of the closure have to hold
at the current state. Then, a set of \emph{termination rules} decides if the
current tableau branch has to be \emph{accepted}~(\ticked),
\emph{rejected}~(\crossed), or if the branch can continue to be explored, making
a step to the next state. Expansion rules are shown in
\cref{tab:expansion-rules}. Each rule of the form $\psi\to\Delta'$ is applied to
any node $u$ such that $\psi\in\Gamma(u)$ and causes the addition of a child
$u'$ of $u$ such that $\Gamma(u')=(\Gamma(u)\setminus\set{\psi})\cup\Delta'$.
Similarly, a rule of the form $\psi\to\Delta'\mid\Delta''$ causes the addition
of two children $u'$ and $u''$, where $\Gamma(u')=(\Gamma(u)\setminus\set{\psi})
\cup\Delta'$ and $\Gamma(u'')=(\Gamma(u)\setminus\set{\psi})\cup\Delta''$.

By construction, repeatedly applying expansion rules will eventually result into
leaves labelled only by proposition letters, timing constraints, or formulae of the
forms $x.\X\psi$ or $x.\Y\psi$, which cannot be further expanded. Formulae of
this kind are called \emph{elementary formulae}, and a node (resp., a leaf)
whose label contains only elementary formulae is a \emph{poised node} (resp.,
poised leaf).

\begin{table}
  \centering
  \begin{tabular}{>{\rulefont}r  l !{$\to$} l}
		\toprule
			\textterm{Name} & \multicolumn{2}{l}{\textterm{Rule}} \\ \midrule
			CONJUNCTION  & $x.(\psi_1 \land \psi_2)$ & $\set{x.\psi_1, x.\psi_2}$ \\
      FREEZE       & $x.y.\psi_1$
                   & $\set{x.\psi_1[y / x]}$ \\
			DISJUNCTION  & $x.(\psi_1 \lor \psi_2)$
                   & $\set{x.\psi_1} \mid \set{x.\psi_2}$ \\
			UNTIL        & $x.(\psi_1 \U \psi_2)$
                   & $\set{x.\psi_2}\mid\set{x.\psi_1, x.\X(\psi_1 \U \psi_2)}$ \\
      SINCE        & $x.(\psi_1 \S \psi_2)$
                   & $\set{x.\psi_2} \mid \set{x.\psi_1, x.\Y(\psi_1\S\psi_2)}$ \\
			RELEASE      & $x.(\psi_1 \R \psi_2)$
                   & $\set{x.\psi_1,x.\psi_2} \mid
                      \set{x.\psi_2, x.\X(\psi_1 \R \psi_2)}$ \\
      TRIGGERED    & $x.(\psi_1 \T \psi_2)$
                   & $\set{x.\psi_1,x.\psi_2} \mid
                      \set{x.\psi_2, x.\Y(\psi_1\T\psi_2)}$ \\
		\bottomrule
		\end{tabular}
  \caption{Expansion rules.}
  \label{tab:expansion-rules}
\end{table}

When a poised leaf is obtained, the search can proceed to the next temporal
state. The formulae labelling the current state are used to determine the label of
the next one. Moreover, an amount of time has to be guessed to choose the
timestamp of the next state. This operation is performed by the \Rule{step}
rule.

\begin{rules}
  \item[Step] Let $u$ be a poised node, and let $\delta_{z.\phi}\ge0$ be the
        bound as computed in \cref{prop:delta-bounded-models}. Then,
        $\delta_{z.\phi}+1$ children nodes $u_0,\ldots,u_{\delta_{z.\phi}}$ are
        added to $u$, such that:
        \begin{equation*}
          \begin{aligned}
            \Gamma(u_\delta)&=\set{x.\psi^\delta \mid x.\X\psi\in\Gamma(u)}\\
            \time(u_\delta)&=\time(u)+\delta
          \end{aligned}\quad\text{for all $0\le \delta \le\delta_{z.\phi}$}
        \end{equation*}
\end{rules}

The \Rule{step} rule is one of the most evident differences between the tableau
system for \TPTL and \TPTLbP, and those for \LTL and \LTLP,
since here we have to handle the advancement of the timestamp of the next state.
The formulae in the subsequent state, which are taken from the \emph{tomorrow}
formulae of the current one, are shifted accordingly.

Besides the children added to by the \Rule{step} rule, others can be
subsequently added to a poised node, as it will be shown later, if it does not
fulfil some \emph{past} request coming from the next state. Given a branch
$\bar u=\seq{u_0,\ldots,u_n}$ and a poised node $u_i$, with $0\le i < n$, $u_i$
is said to be a \emph{step node} for the branch $\bar u$ if its child $u_{i+1}$
has been added by the \Rule{step} rule. Moreover, if $u_i$ is a step node for
the branch $\bar u$, we define $\Delta(u_i) = \bigcup_{j < k \le i}
\Gamma(u_k)$, where $u_j$ is the closest step node among the proper ancestors
of $u_i$.

In any case, \emph{before} applying the \Rule{Step} rule to advance to the next
state, the branch has to be checked for contradictions and any other condition that
can cause it to be rejected or accepted. To this end, the following \emph{termination
rules} are applied to poised leaves.
In what follows, any
formula $x.\psi\in\C(\phi)$ of the form $x.\X(\psi_1\U\psi_2)$ is called an
$\X$-eventuality.
Let $\bar u=\seq{u_0,\ldots,u_n}$ be a branch of the tableau.
An $\X$-eventuality $x.\psi = x.\X(\psi_1\U\psi_2)$ is said to be
\emph{requested} at position $i$ if $x.\psi\in\Gamma(u_i)$, and \emph{fulfilled}
at position $j>i$ if $x.\psi_2^{\delta_j}\in\Gamma(u_j)$ and
$x.\psi_1^{\delta_k}\in\Gamma(u_k)$, for all $i < k < j$, where
$\delta_l=\time(u_l)-\time(u_i)$, for $i < l \leq j$.\fitpar

\begin{rules}
  \item[Contradiction] Let $u$ be a poised leaf. If
        $\set{x.p,x.\neg p}\subseteq\Gamma(u)$, for some $p\in\AP$, then $u$ is
        crossed and the branch is rejected.
  \item[Empty] Let $u$ be a poised leaf such that $\Gamma(u)=\emptyset$. Then,
        $u$ is ticked and the branch is accepted.
  \item[Sync] Let $u$ be a poised node. If either $x.(x\le x + c)\in\Gamma(u)$,
        $x.\neg(x\le x+c)\in\Gamma(u)$, $x.(x \equiv_m x + c)\in\Gamma(u)$, or
        $x.\neg(x\equiv_m x + c)\in\Gamma(u)$, but, respectively, $c < 0$,
        $c\ge 0$, $c\not\equiv_m0$, or $c\equiv_m0$, then $u$ is crossed and the
        branch is rejected.
  \item[Yesterday] Let $v$ be a poised leaf such that $x.\Y\psi\in\Gamma(v)$
        for some $x.\psi\in\C(z.\phi)$. If $v$ is the first \emph{step node} of
        its branch, then it is crossed and the branch is rejected.
        Otherwise, let $u<v$ be the closest \emph{step node} among the proper
        ancestors of $v$, $\delta_{u,v}=\time(v)-\time(u)$, and
        $\Omega=\set{x.\psi^{-\delta_{u,v}}\mid x.\Y\psi\in\Gamma(v)}$.
        If $\Omega\not\subseteq\Delta(u)$, then $v$ is crossed, the
        branch is rejected, and a child $u'$ is added to $u$ such that
        $\Gamma(u')=\Gamma(u)\cup\Omega$.
  \item[Loop] Let $v$ be a poised leaf, and $u<v$ a \emph{step node}, proper
        ancestor of $v$, such that $\Gamma(u)=\Gamma(v)$ and all the
        $\X$-eventualities requested in $u$ are fulfilled between $u$ and $v$
        (included). Then,
        \begin{rules}
          \item[Loop_1] if $\time(u)=\time(v)$, then $v$ is crossed and the
                branch rejected;
          \item[Loop_2] if $\time(u)<\time(v)$, then $v$ is ticked and the
                branch accepted.
        \end{rules}
  \item[Prune] Let $w$ be a poised leaf. If there exist three
        \emph{step nodes} $u<v<w$ such that $\Gamma(u)=\Gamma(v)=\Gamma(w)$, and
        each $\X$-eventuality requested in $u$ and fulfilled between $v$ and $w$
        is also fulfilled between $u$ and~$v$, then, $w$ is crossed and the
        branch rejected.
\end{rules}

The above rules resemble the structure of the one-pass and tree-shaped tableau
for \LTLP presented in \cite{GiganteMR17}, but adapted to the new logic. The
\Rule{Sync} rule has been added to the termination rules to detect contradictory
timing constraints. The \Rule{Step} rule, thanks to the temporal shift operator,
can push freeze quantifiers to the next state, without explicitly keeping track
of variable bindings. In such a way, it ensures that  nodes are labelled only by
closed formulae of the form $x.\psi$, the base case of timing constraints
consisting only in formulae of the form $x.(x\sim x + c)$, which involve a
single variable. Judging the validity of the constraints is then trivial. This
mechanism was originally exploited in the graph-shaped
tableau for \TPTL given in \cite{AlurH94}. The \Rule{Loop} rule handles the case
where the branch is cycling through a segment which fulfils all the requests,
and thus a satisfying model of the formulae has been found. However, since timed
state sequences must satisfy the \emph{progress} property, the rule has to
reject those branches where the loop has not advanced  in time (\Rule{Loop_1})
and to accept a branch only if some progress has been made (\Rule{Loop_2}).
In \cref{fig:example1}, we give a brief example of tableau for the TPTL formula
$x.\G y.(p \to y \le x+2)$, which expresses the property that $p$ holds only on
states with timestamp less than $2$. Firstly we focus on node $u_2$: it is crossed
by the \Rule{Loop_1} rule because there is another node (\ie $u_1$) such that all
the conditions of the \Rule{Loop} rule are satisfied but time does \emph{not}
increase between these two nodes. Nevertheless, if we choose to increment by one
time unit the candidate model by means of the \Rule{STEP} rule, we eventually
reach node $u_3$, which does not contain any timed constraint, since they all have
been simplified by the temporal shift $\cdot^\delta$. Now the \Rule{Loop_2} can be
applied on node $u_4$, since nodes $u_3$ and $u_4$ have the same label, all the
X-eventualities (there are none) are fulfilled in between, and the time between
$u_3$ and $u_4$ \emph{does} increase: thus, we tick $u_4$ and accept the
corresponding branch. This, in turn, corresponds to a correct model of the input
formula which starts from the root of the tableau, goes down to $u_4$ and then
cycles between $u_3$ and $u_4$.\fitpar

\begin{figure}[h]
  \includegraphics[width=\textwidth]{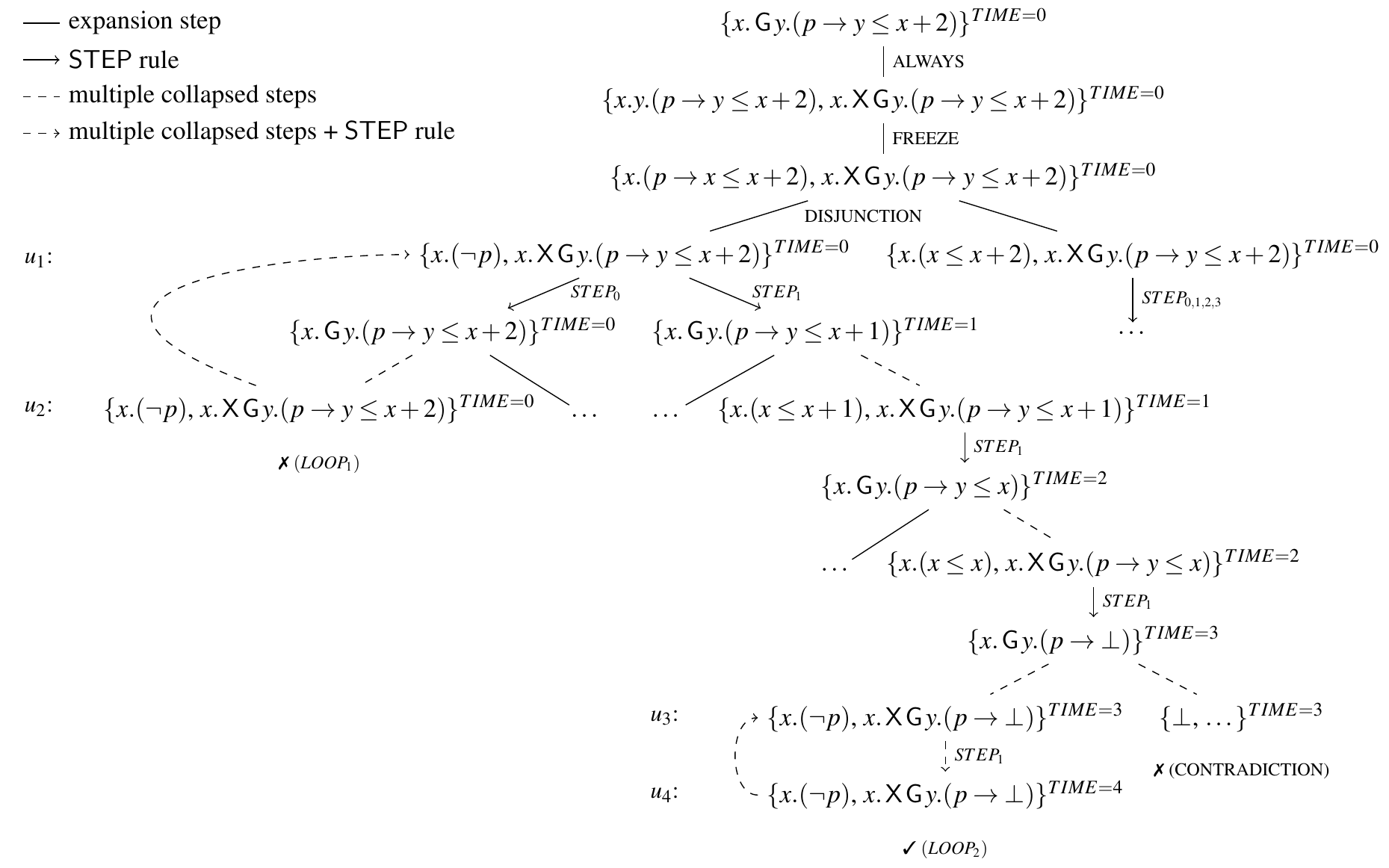}
  \caption{The tableau for the formula $x.\G y.(p \to y \le x+2)$}
  \label{fig:example1}
\end{figure}

The \Rule{Prune} rule handles the case where the branch is
cycling without being able to fulfil all the requests, possibly because some of
them are unsatisfiable. This rule was the main novelty of the one-pass and
tree-shaped tableau system for \LTL in \cite{Reynolds16a},
and, notably, it does not need to be changed at all to work for \TPTLP as well.
An interesting example showing an application of the \Rule{Prune} rule in the
context of a tableau for LTL is shown in \cite{BertelloGMR16}.

Note that, supposing to employ a proper temporal shift operator, the set of
\cref{def:closure} is finite. This fact allows us to prove the termination
of the construction of the tableau with a simple argument.

\begin{theorem}[Termination of tableau construction]
  \label{thm:tptl-termination}
  Let $z.\phi$ be a closed \TPTLP formula and let   $\cdot^\delta$ be a proper temporal shift operator. Then, the construction of a complete tableau
for $z.\phi$, built with $\cdot^\delta$, always terminates in a finite number of steps.
\end{theorem}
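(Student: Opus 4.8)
The plan is to show that the complete tableau for $z.\phi$ is a \emph{finite} tree; since each rule application adds only finitely many nodes, termination of the construction follows at once. To prove finiteness I would appeal to K\"onig's lemma: a finitely branching tree with no infinite branch is finite. Hence it suffices to establish (a) that every node has finitely many children and (b) that no branch is infinite. The whole argument rests on the observation, already made after \cref{def:closure}, that $\C(z.\phi)$ is finite whenever $\cdot^\delta$ is \emph{proper} — this is exactly the content of \Cref{def:temporal-shift:convergence} of \cref{def:temporal-shift} — so that every label $\Gamma(u)$ lives in the finite powerset $2^{\C(z.\phi)}$.

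Part (a) is the routine half. Each expansion rule of \cref{tab:expansion-rules} adds at most two children, and the \Rule{Step} rule adds exactly $\delta_{z.\phi}+1$ of them, a finite number by \cref{prop:delta-bounded-models}. Among the termination rules, only \Rule{Yesterday} adds nodes, and it may fire several times at the same step node $u$; the cleanest way to bound this is to check that the successive labels $\Gamma(u)\cup\Omega$ it installs at $u$'s new children strictly increase inside $2^{\C(z.\phi)}$, which — being a finite set — permits only finitely many such firings. This monotonicity of the past obligations $\Omega$ is the one place where the past operators call for a small dedicated argument, but it is elementary.

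For part (b), I would first note that none of the termination rules lengthens a single branch: \Rule{Contradiction}, \Rule{Empty}, \Rule{Sync} and \Rule{Loop} merely stop it, while \Rule{Yesterday} crosses the current leaf and grafts a child onto an \emph{ancestor}, adding breadth rather than depth. A branch is therefore a finite expansion phase followed by an alternation of \Rule{Step} applications and further finite expansion phases. Each expansion phase terminates because one can attach to formulae a syntactic weight $\mu$ that is $0$ on elementary formulae (proposition letters, timing constraints, and formulae of the form $x.\X\psi$ or $x.\Y\psi$) and positive otherwise, with $\mu(x.(\psi_1\land\psi_2))=1+\mu(x.\psi_1)+\mu(x.\psi_2)$ and analogous clauses for the remaining connectives; a case analysis then shows that every expansion rule strictly decreases $\sum_{f\in\Gamma(u)}\mu(f)$. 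As this sum is a bounded non-negative integer, only finitely many expansions separate consecutive poised leaves, so an infinite branch would necessarily contain infinitely many step nodes.

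Ruling out that last possibility is the heart of the proof and the step I expect to be the main obstacle; it is where the \Rule{Prune} rule does its work. Suppose a branch had infinitely many step nodes. Their labels lie in the finite set $2^{\C(z.\phi)}$, so some label $\Gamma$ recurs at step nodes $t_1<t_2<\dots$ infinitely often. Writing $\mathit{Ful}(i,j)$ for the set of $\X$-eventualities of $\Gamma$ fulfilled between $t_i$ and $t_j$, the cumulative sets $\mathit{Ful}(1,j)$ form a non-decreasing chain of subsets of a finite set, so they stabilise: there is $J\ge 2$ with $\mathit{Ful}(1,j)=\mathit{Ful}(1,J)=:C^{*}$ for all $j\ge J$. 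Choosing $u=t_1$, $v=t_J$ and $w=t_{J+1}$, stabilisation forces $\mathit{Ful}(J,J+1)\subseteq C^{*}=\mathit{Ful}(1,J)$; that is, every $\X$-eventuality requested in $u$ and fulfilled between $v$ and $w$ is already fulfilled between $u$ and $v$ — precisely the premise of \Rule{Prune}. That rule would then cross $w$ and reject the branch, contradicting its infinitude. The delicate points to pin down are the match between the informal ``fulfilled between two nodes'' used by \Rule{Prune} and the position-relative notion of fulfilment defined through the shifts $\delta_l=\time(u_l)-\time(u_i)$, and the verification that the eventualities tracked at the recurring label $\Gamma$ genuinely coincide across $t_1,t_2,\dots$; granting these, stabilisation together with \Rule{Prune} closes every branch, and with part (a) this gives a finite tableau.
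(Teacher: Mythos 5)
Your proof is correct and follows essentially the same route as the paper's: finite branching plus K\"onig's lemma, with finiteness of $\C(z.\phi)$ (guaranteed by \cref{def:temporal-shift}) bounding the set of possible labels, and a pigeonhole argument showing that on any would-be infinite branch the \Rule{Prune} rule must eventually fire. Your write-up is in fact more detailed than the paper's: the decreasing weight $\mu$ for expansion phases, and the stabilisation of the cumulative fulfilment sets $\mathit{Ful}(1,j)$, make precise what the paper dispatches in a single sentence (``the different combinations of $\X$-eventualities satisfied between either two of those nodes is finite''); note only that when the recurring label also satisfies the \Rule{Loop} premise that rule may end the branch before \Rule{Prune} does, which serves your contradiction equally well.

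One local repair is needed in part (a): your claim that the labels $\Gamma(u)\cup\Omega$ installed by successive failed \Rule{Yesterday} instances at a step node $u$ \emph{strictly increase} is not right. Each such label strictly contains $\Gamma(u)$ (since $\Omega\not\subseteq\Delta(u)\supseteq\Gamma(u)$), but failures originating at distinct poised leaves below different \Rule{Step} children of $u$ can produce incomparable sets $\Omega$, so the new labels need not form a chain. Finiteness of these children follows instead from the observation that only the finitely many poised leaves having $u$ as their closest step-node ancestor can trigger the rule, each at most once (or, as the paper puts it, from the convention that the rule never creates two children of $u$ with the same label, of which there are finitely many). Either fix is immediate and does not affect the rest of your argument.
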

\begin{proof}
  First, observe that the tableau 
  for a \TPTLP formula $z.\phi$
  has a finite branching factor, since all the expansion rules create at most
  two children for any node, and the number of children created by the
  \Rule{Step} rule is bounded by $\delta_\phi$. Other children may be added to a
  poised node by failed instances of the \Rule{Yesterday} rule, but since
  $\C(z.\phi)$ is finite, the number of possible different labels is finite, and
  since the rule never creates two nodes with the same label, then the number of
  children added in this way is finite as well. Thus, by König's lemma, for the
  construction to proceed forever the tree should contain at least one infinite
  branch. However, since the number of possible labels is finite, two nodes with
  the same label are guaranteed to appear, and if they do not trigger the
  \Rule{Loop} rule, then, after a finite number of repetitions of the same
  label, the \Rule{Prune} rule is guaranteed to be eventually triggered because
  the different combinations of $\X$-eventualities satisfied between either two
  of those nodes is finite as well.
\end{proof}

\subsection{The tableau system for \TPTL}
\label{sub:tptl-tableau}


Let us now specialise the above general rules to \TPTL formulae. Basically, we need to define a proper temporal shift operator.
Consider a formula $x.\psi\in\C(z.\phi)$, and any other
variable $y$ appearing in $\psi$. Since $x.\psi$ is a closed formula, $y$ must be
quantified inside $\psi$, and, being $\psi$
a future-only formula, it can only be bound to a timestamp greater than or equal to
$x$. Hence, any timing constraint of the form $x\le y + c$, with $c\ge0$, always
holds regardless of the specific evaluation of the variables. A similar
consideration can be made for timing constraints of the form $y\le x + c$, with
$c<0$, which are always false. This fact, originally observed in \cite{AlurH94},
leads to the following definition of the temporal shifting operator for \TPTL
formulae.

\begin{definition}[Temporal shift operator for \TPTL formulae~\cite{AlurH94}]
  \label{def:tptl-shift}
  Let $x.\psi$ be a \emph{closed} \TPTL formula and $\delta\in\N$. 
  Then, $x.\psi^\delta$ is the formula obtained by applying the following steps:
  \begin{enumerate}
    \item replace any timing constraint of the forms $x\le y + c$, $y\le x+c$, and
          $x\equiv_m y+c$, for any other variable $y\in \V$, by, respectively,
          $x\le y + c'$, $y\le x + c''$, and $x\equiv_m y + (c'\mod m)$,
          where $c'=c+\delta$ and $c''=c-\delta$; and then
    \item replace any timing constraint of the forms $x\le y + c'$ and
          $y\le x + c''$, with $c'\ge0$ and $c''<0$, by, respectively, $\true$
          and $\false$.
  \end{enumerate}
\end{definition}

The one-pass and tree-shaped tableau system for \TPTL is obtained from the set of
rules of Section \ref{subsec:skeleton} by considering the temporal shift operator of
\cref{def:tptl-shift}. It can be easily checked that \cref{def:tptl-shift} satisfies
the requirements of \cref{def:temporal-shift} for any non-negative $\delta\ge0$.
Since the \Rule{Yesterday} rule never comes into play with \TPTL
formulae, this is sufficient, as the proofs in \cref{sec:proofs} will confirm.

\subsection{The tableau system for \TPTLbP}
\label{sub:tptlbp-tableau}

Let us now specialise the above set of tableau rules to \TPTLbP.
\TPTLbP is not a proper fragment of \TPTLP \emph{as-is}, and thus it may seem that
those rules cannot be directly applied to \TPTLbP formulae. However, \TPTLbP
can be embedded into a \emph{guarded} fragment of \TPTLP, that is, a syntactic
fragment of the logic, that we call \GTPTLP, where each occurrence of any temporal
operator is guarded by an additional formula which implements the bounded
semantics of \TPTLbP operators.
%
%
\GTPTLP syntax is defined as
follows:
\begin{align*}
  \phi :=
    p \mid {} & \neg\phi_1 \mid \phi_1\lor\phi_2
             \mid x\le y + c \mid x\le c \mid x\equiv_m y + c \\
    \mid {} & x.\X y.(\gamma_w^{x,y} \land \phi_1) \mid
              x.\X y.(\gamma_w^{x,y} \to \phi_1) \mid
              x.\Y y.(\gamma_w^{x,y} \land \phi_1) \mid
              x.\Y y.(\gamma_w^{x,y} \to \phi_1) \\
    \mid {} & x.\bigl(z.(\gamma_w^{x,z}\implies\phi_1) \U
                y.(\gamma_w^{x,y} \land \phi_2)\bigr) \mid
              x.\bigl(z.(\gamma_w^{x,z}\land\phi_1)\R
                y.(\gamma_w^{x,y}\implies\phi_2)\bigr)\\
    \mid {} & x.\bigl(z.(\gamma_w^{x,z}\implies\phi_1) \S
                y.(\gamma_w^{x,y} \land \phi_2)\bigr) \,\mid
              x.\bigl(z.(\gamma_w^{x,z}\land\phi_1) \T
                y.(\gamma_w^{x,y}\implies\phi_2)\bigr),
\end{align*}

\noindent where $\gamma_w^{x,y} = y\le x+w$, if $w\ne+\infty$, and
$\gamma_w = \true$ otherwise, with $w\in\N\cup\set{+\infty}$ and $x$ and $y$
fresh in $\phi_1$ and $\phi_2$. Moreover, as in \TPTLbP, each temporal operator
can appear with $w=+\infty$ only if the corresponding formula is closed. All the
temporal operators where $w\ne+\infty$ are called \emph{guarded}.

One can check that
\begin{claims}
  \item the \emph{negated normal form} of a \GTPTLP formula is still a \GTPTLP
        formula, and
  \item each \TPTLbP formula can be translated into an equivalent \GTPTLP one.
\end{claims}
A notable example is the translation of the $\X$ and $\wX$ operators (and,
symmetrically, $\Y$ and $\wY$), that both get translated into a formula using a
guarded $\X$ operator, but with the guard that, respectively, is conjuncted to
and implies the target formula, \ie $\X_w\psi\equiv x.\X y.(y \le x + w \land
\psi)$ and $\wX_w\psi\equiv x.\X y.(y\le x + w \implies \psi)$, if $w\ne+\infty$,
and simply $\X_{+\infty}\psi\equiv\wX_{+\infty}\psi\equiv\X\psi$ otherwise. The
translation provides a sound and complete embedding of \TPTLbP into (the \GTPTLP
syntactic fragment of) \TPTLP.

\begin{lemma}
  Let $\phi$ be a \TPTLbP formula over the proposition letters $\AP$ and the
  variables $\V$. Then, there exists a \GTPTLP formula $\phi'$ such that for
  any timed state sequence $\rho$, any environment $\xi$, and any $i\ge0$, it
  holds that $\rho\models^i_\xi\phi$ if and only if $\rho\models^i_\xi\phi'$.
\end{lemma}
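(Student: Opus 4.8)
The plan is to prove this by structural induction on the \TPTLbP formula $\phi$, constructing the translation $\phi'$ recursively and showing semantic equivalence at each step. First I would define the translation map $\cdot'$ explicitly on the grammar of \TPTLbP: on the boolean and atomic cases ($p$, $\neg\phi_1$, $\phi_1\lor\phi_2$, $x.\phi_1$, and the three kinds of timing constraints) the translation is the identity (applied recursively to subformulae where relevant), since these productions appear verbatim in the \GTPTLP grammar. The interesting cases are the eight temporal productions. Following the guidance already given in the excerpt, I would set, for instance, $(\X_w\phi_1)' = x.\X y.(\gamma_w^{x,y}\land\phi_1')$ and $(\wX_w\phi_1)' = x.\X y.(\gamma_w^{x,y}\to\phi_1')$ with $x,y$ chosen fresh, and analogously for $\Y_w,\wY_w$ using $\Y$; for the binary operators I would use the corresponding guarded productions, e.g.\ $(\phi_1\U_w\phi_2)' = x.\bigl(z.(\gamma_w^{x,z}\implies\phi_1')\U y.(\gamma_w^{x,y}\land\phi_2')\bigr)$, and symmetrically for $\R_w$, $\S_w$, $\T_w$. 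I would remark that when $w=+\infty$ the guard $\gamma_w$ is $\true$, so the guarded operator collapses to the plain \TPTLP operator, matching the $+\infty$ semantics of \TPTLbP; the freshness and single-use-of-variables conventions from the skeleton guarantee these freeze quantifiers are well formed and that $x,y,z$ do not clash with variables occurring free in $\phi_1',\phi_2'$.

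The core of the argument is the inductive step establishing $\rho\models^i_\xi\phi \iff \rho\models^i_\xi\phi'$. The boolean and freeze cases follow immediately from the induction hypothesis and the fact that the translation commutes with these connectives. For each temporal case I would unfold both semantics side by side. Consider $\X_w$: by the \TPTLbP semantics $\rho\models^i_\xi\X_w\phi_1$ holds iff $\tau_{i+1}-\tau_i\le w$ and $\rho\models^{i+1}_\xi\phi_1$. On the translated side, evaluating $x.\X y.(\gamma_w^{x,y}\land\phi_1')$ at position $i$ first freezes $x$ to $\tau_i$, then steps to $i+1$ via $\X$, then freezes $y$ to $\tau_{i+1}$; the guard $\gamma_w^{x,y}=y\le x+w$ then evaluates under the environment binding $x\mapsto\tau_i$, $y\mapsto\tau_{i+1}$, so it holds iff $\tau_{i+1}\le\tau_i+w$, i.e.\ exactly the bound condition, while $\phi_1'$ is evaluated at $i+1$ and by the induction hypothesis matches $\phi_1$. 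The conjunction thus reproduces the \TPTLbP clause. The weak variant $\wX_w$ replaces the conjunction by an implication, matching the ``$\tau_{i+1}-\tau_i\le w$ implies $\rho\models^{i+1}_\xi\phi_1$'' clause; I would note here that the \X in \GTPTLP is the strong tomorrow, so the next state is required to exist, which agrees with the \TPTLbP convention that $\wX_w$ ranges only over an existing successor. The binary operators are handled the same way: in $\U_w$ the witness $j\ge i$ has its guard $\gamma_w^{x,y}=y\le x+w$ evaluated with $x\mapsto\tau_i$ and $y\mapsto\tau_j$ (the latter frozen at the witness position), yielding precisely condition (i) $\tau_j-\tau_i\le w$ of the \TPTLbP \U-clause, while the implicative guard on the left argument ensures $\phi_1'$ is only required where it lies within the bound, matching clause (iii). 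The past cases $\Y_w,\wY_w,\S_w,\T_w$ are symmetric, using the \Y operator and the fact that freezing records the earlier timestamp so that $\gamma_w^{x,y}=y\le x+w$ becomes $\tau_j\le\tau_i+w$, i.e.\ $\tau_i-\tau_j\le w$.

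I expect the main obstacle to be bookkeeping the freeze bindings carefully enough to see that each guard is evaluated against the \emph{correct pair of timestamps}, since the equivalence hinges entirely on which timestamp $x$, $y$ (and $z$) have been frozen to at the point where the guard is interpreted. The subtlety is that in the binary cases the guarding variable on the left argument (the $z$ in $z.(\gamma_w^{x,z}\implies\phi_1')$) is re-frozen at \emph{every} intermediate position $k$ along the until/since scope, so the guard $\gamma_w^{x,z}$ reads $\tau_k\le\tau_i+w$ at each such $k$; I would verify that because the \TPTLbP clause (iii) only quantifies $\phi_1$ over positions strictly inside the interval up to the witness (which itself satisfies the bound), the implicative guarding neither over- nor under-constrains $\phi_1'$, using monotonicity of $\tau$ to argue that any $k$ with $i\le k<j$ automatically satisfies $\tau_k\le\tau_j\le\tau_i+w$. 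Once this timestamp-tracking is pinned down, each of the eight temporal equivalences is a direct comparison of the two semantic clauses, and the overall statement follows by induction. I would close by taking $\phi'$ to be the translation of the input formula, which by construction is a \GTPTLP formula (claim (i) in the excerpt guarantees negated normal form is preserved, so nothing is lost when the tableau later assumes \NNF), completing the proof.
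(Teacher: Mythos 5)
Your overall strategy---a compositional translation defined by recursion on the grammar, followed by a structural induction comparing the two semantic clauses case by case---is exactly what the paper intends (the paper states the lemma with only the $\X_w$/$\wX_w$ example and ``one can check'', leaving the verification to the reader), and your treatment of the \emph{future} operators is correct and careful; in particular, the observation that monotonicity of $\tau$ makes the implicative guard on the left argument of $\U_w$/$\R_w$ equivalent to the unconditional requirement at intermediate positions is precisely the point that needs to be made. The gap is in the past cases, which you dismiss as ``symmetric''. With the guard as written, $\gamma_w^{x,y}=y\le x+w$, and with $x$ frozen at the current position $i$ and $y$ frozen at the past position $j\le i$, the guard evaluates to $\tau_j\le\tau_i+w$, which is \emph{vacuously true} by monotonicity ($\tau_j\le\tau_i$ and $w\ge0$). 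Your step ``$\tau_j\le\tau_i+w$, \ie $\tau_i-\tau_j\le w$'' is an algebra error: $\tau_j\le\tau_i+w$ says $\tau_j-\tau_i\le w$, not $\tau_i-\tau_j\le w$. Consequently, as written, your translations of $\Y_w$, $\wY_w$, $\S_w$, $\T_w$ are equivalent to the corresponding \emph{unbounded} past operators, and the claimed equivalence fails: for instance, if $\tau_1-\tau_0>w$ and $p$ holds at position $0$, then $\Y_w p$ is false at position $1$, while $x.\Y y.(y\le x+w\land p)$ is true there. The repair is to orient the guard the other way for past operators, \ie use $x\le y+w$, so that with $x\mapsto\tau_i$ and $y\mapsto\tau_j$ it reads $\tau_i\le\tau_j+w$, which is the \TPTLbP bound $\tau_i-\tau_j\le w$; with that orientation your monotonicity argument for the intermediate positions of $\S_w$/$\T_w$ does go through symmetrically to the future case. (In fairness, the paper's own definition of $\gamma_w^{x,y}$ is stated uniformly and has the same defect when read literally, but since you are the one supplying the proof, the burden of getting the guard right falls on you.)

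There is a second, independent failure of the claimed symmetry, at position $0$. You correctly note that for $\wX_w$ the use of the strong $\X$ is harmless because timed state sequences are infinite to the right; but they are not infinite to the left. By the \TPTLbP semantics, $\wY_w\phi_1$ is \emph{vacuously true} at $i=0$ (the antecedent ``$i>0$ and $\tau_i-\tau_{i-1}\le w$'' fails), whereas your translation $x.\Y y.(\gamma_w^{x,y}\to\phi_1')$ uses the strong $\Y$ of \TPTLP, which is \emph{false} at $i=0$. So the equivalence fails at the origin for every $\phi_1$. The weak yesterday must be translated by a genuinely weak construct, \eg as $\neg\, x.\Y y.(\gamma\land\neg\phi_1')$ with the corrected guard $\gamma$ (the negation of a guarded strong yesterday, which is still a \GTPTLP formula since \GTPTLP is closed under negation), or the case $i=0$ must be argued separately. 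With these two local repairs, the induction closes and the rest of your argument stands.
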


Hence, we can apply the general tableau rules to the \GTPTLP translation of any
\TPTLbP formula, provided that, similar to the \TPTL case, a proper temporal
shift operator can be defined. This can actually be done by exploiting the following
observation:
thanks to the bounds applied to the \TPTLbP temporal operators, whose semantics
is implemented in \GTPTLP formulae by means of the guards, when interpreting a
timing constraint like $x\le y + c$, the distance between variables $x$ and $y$
cannot be greater than an upper bound $W$ that depends on the bounds applied to
the temporal operators of the formula. This observation was exploited in
\cite{DellaMonicaGMSS17} to prove decidability and \EXPSPACE-completeness of
\TPTLbP. Now, given a \GTPTLP formula $z.\phi$, let $m$ be the number of
\emph{guarded} temporal operators used in $z.\phi$, let $w_0=\max\set{w_1,
\ldots,w_m, \delta_{z.\phi}}$, where $w_1,\ldots,w_m$ are the bounds applied
to the respective guarded temporal operators and $\delta_{z.\phi}$ is computed
as per \cref{prop:delta-bounded-models}, and let $W_{z.\phi}=w_0\cdot(m + 1)$.

\begin{definition}[Temporal shift operator for \GTPTLP~\cite{DellaMonicaGMSS17}]
  \label{def:tptlbp-shift}
  Let $z.\phi$ be a \emph{closed} \TPTL formula, $\delta\in\N$, and $x.\psi\in\C(z.\phi)$.
  Then, $x.\psi^\delta$ is the formula obtained by applying the following steps:
  \begin{enumerate}
    \item replace any timing constraint of the forms $x\le y + c$, $y\le x+c$, and
          $x\equiv_m y+c$, for any other variable $y\in V$, by, respectively,
          $x\le y + c'$, $y\le x + c''$, and $x\equiv_m y + (c'\mod m)$,
          where $c'=c+\delta$ and $c''=c-\delta$; and then
    \item replace any timing constraint of the forms $x \le y+c$ and $y \le x+c$
          either by $\true$, if $c \ge W_{z.\phi}$, or by $\false$, if
          $c < -W_{z.\phi}$.
  \end{enumerate}
\end{definition}

It can be easily shown that \cref{def:tptlbp-shift} defines a
temporal shift operator as per \cref{def:temporal-shift}~\cite{DellaMonicaGMSS17} .


\section{Soundness and Completeness}
\label{sec:proofs}

We now prove soundness and completeness of the
tableau systems for \TPTL and \TPTLbP.
Given that the two systems are nearly identical, excepting for the definition of the
proper temporal shift operator, both proofs will be given at once,
differentiating between the two logics only when necessary.

\subsection{Soundness}
Here we prove that the tableau system is \emph{sound}, that is, if a complete
tableau for a formula has a successful branch, then the formula is satisfiable (and a model for the formula can be effectively extracted from the
successful branch). As a preliminary step, we introduce the notion of \emph{pre-model}: an abstract,
easy to manipulate representation of a model of a formula.

\begin{definition}[Atom]
	\label{def:atom}
	An \emph{atom} for a \TPTL~/~\TPTLbP formula $z.\phi$ is a set
	$\Delta\subseteq\C(z.\phi)$ such that:

	\begin{conditions}
		\item $x.p \in \Delta$
				& $x.\neg p\not\in\Delta$, for any proposition $x.p\in\C(z.\phi)$;
		\item $x.y.\psi_1 \in \Delta$
				& $x.\psi_1[y/x] \in \Delta$;
		\item $x.(\psi_1 \land \psi_2) \in \Delta$
				& $\set{x.\psi_1, x.\psi_2} \subseteq \Delta$;
		\item $x.(\psi_1 \lor \psi_2) \in \Delta$
				& either $x.\psi_1 \in \Delta$ or $x.\psi_2 \in \Delta$;
		\item $x.(\psi_1 \U \psi_2) \in \Delta$
				& either $x.\psi_2 \in \Delta$ or
									 $\set{x.\psi_1, x.\X(\psi_1 \U \psi_2)} \subseteq \Delta$;
		\item $x.(\psi_1 \R \psi_2) \in \Delta $
				& either $\set{x.\psi_1,x.\psi_2} \subseteq \Delta$ or
								 $\set{x.\psi_2,x.\X(\psi_1 \R \psi_2)} \subseteq \Delta$;
		\item $x.(\psi_1 \S \psi_2) \in \Delta$
				& either $x.\psi_2 \in \Delta$ or
								 $\set{x.\psi_1, x.\Y(\psi_1 \S \psi_2)} \subseteq \Delta$;
		\item $x.(\psi_1 \T \psi_2) \in \Delta$
				& either $\set{x.\psi_1,x.\psi_2} \subseteq \Delta$ or
								 $\set{x.\psi_2, x.\Y(\psi_1 \T \psi_2)} \subseteq \Delta$.
	\end{conditions}
\end{definition}

Intuitively, \emph{atoms} are sets of formulae such that the presence of each
non-elementary formula is justified (\ie implied) by the elementary formulae in
the set, and each non-elementary formula that can be justified by the set is
present.

\begin{definition}[Pre-model]
	\label{def:pre-model}
	Let $z.\phi$ be a closed \TPTL \ / \TPTLbP formula. A \emph{pre-model} of
	$z.\phi$ is a pair $\Pi = \pair{\bar\Delta,\bar\iota}$, where
	$\bar\iota=\seq{\iota_0,\iota_1,\dots}$ is an infinite sequence of timestamps
	satisfying the \emph{progress} and \emph{monotonicity} conditions, and
	$\bar\Delta=\seq{\Delta_0,\Delta_1,\ldots}$ is an infinite sequence of atoms
	for $z.\phi$ such that, for all $i\ge0$,\fitpar
	\begin{enumerate}
		\item \label{def:pre-model:base}
					$z.\phi \in \Delta_0$;
		\item \label{def:pre-model:tomorrow}
					if $x.\X\psi \in \Delta_i$, then
					$x.\psi^{\delta_{i+1}}\in\Delta_{i+1}$;
		\item \label{def:pre-model:until}
					if $x.(\psi_1 \U \psi_2) \in \Delta_i$, then there exists a $j\ge i$
					such that $x.\psi_2^{\delta_{i,j}} \in \Delta_j$ and
					$x.\psi_1^{\delta_{i,k}}\in\Delta_k$ for all $i \le k < j$;
		\item \label{def:pre-model:yesterday}
					if $x.\Y\psi\in\Delta_i$, then $i>0$ and
					$x.\psi^{-\delta_i}\in\Delta_{i-1}$;
		\item \label{def:pre-model:since}
					if $x.(\psi_1 \S \psi_2) \in \Delta_i$, then there exists a $j \le i$
					such that $x.\psi_2^{-\delta_{j,i}} \in \Delta_j$ and
					$x.\psi_1^{-\delta_{k,i}}\in\Delta_k$ for all $j < k \le i$,
	\end{enumerate}
	where $\delta_0=\iota_0$, $\delta_{i+1} = \iota_{i+1} - \iota_i$ for $i\ge0$,
	and $\delta_{n,m}=\sum_{n< p\le m}\delta_p$ for all $n,m\in\N$.
\end{definition}

Pre-models take their name from the fact that they abstractly represent a model
for their formula, and thus the existence of a pre-model witnesses the
satisfiability of the formula.

\begin{lemma}
	\label{lemma:pre-model-to-model}
	Let $z.\phi$ be a closed \TPTL \ / \TPTLbP formula. If $z.\phi$ has a
	pre-model, then $z.\phi$ is satisfiable.
\end{lemma}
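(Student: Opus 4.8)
The plan is to read a concrete timed state sequence off the pre-model and then prove a \emph{truth lemma} relating membership in the atoms to satisfaction. I would take $\tau=\bar\iota$, so that monotonicity and progress are inherited verbatim, and set $\sigma_i=\set{p\in\AP\mid x.p\in\Delta_i}$ for every $i\ge0$. This is unambiguous because all formulae of $\C(z.\phi)$ share the same outermost freeze variable (inner freezes are eliminated by the freeze rule of \cref{def:closure}), so the propositional clause of \cref{def:atom} forbids $x.p$ and $x.\neg p$ from lying in $\Delta_i$ together. Since $z.\phi$ is closed I may fix an arbitrary environment $\xi$. By \cref{def:pre-model:base} of \cref{def:pre-model} we have $z.\phi\in\Delta_0$, so the whole statement reduces to showing $\rho\models^0_\xi z.\phi$.

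The heart of the argument is the claim that, for every $i\ge0$ and every $x.\eta\in\C(z.\phi)$, if $x.\eta\in\Delta_i$ then $\rho\models^i_\xi x.\eta$. I would prove it by induction on the structural complexity of $\eta$ (its number of connectives, temporal operators, and freeze quantifiers), a measure that the shift $\cdot^\delta$ never increases since it only rewrites the timing-constraint leaves. The computational engine of the proof is the telescoping identity $\delta_{i,j}=\sum_{i<p\le j}\delta_p=\tau_j-\tau_i$, which combined with \cref{def:temporal-shift:semantics} of \cref{def:temporal-shift} gives the equivalence between $\rho\models^j_\xi x.\chi^{\delta_{i,j}}$ and $\rho\models^j_{\xi[x\leftarrow\tau_i]}\chi$: the shift exactly undoes the elapsed time and rebinds $x$ to the timestamp $\tau_i$ demanded by the semantics of the temporal operators. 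The base cases are routine: propositions and negated propositions are settled by the definition of $\sigma_i$ and the propositional clause of \cref{def:atom}, while the single-variable constraints $x.(x\le x+c)$ and $x.(x\equiv_m x+c)$ (the only constraints surviving in the closure) are decided purely arithmetically, using that the atoms of a pre-model carry no unsatisfiable elementary formula. The Boolean and freeze cases follow from the corresponding clauses of \cref{def:atom} and the induction hypothesis applied to the strictly smaller $x.\psi_1$, $x.\psi_2$, or $x.\psi_1[y/x]$.

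The \emph{liveness} temporal operators are then discharged directly by the pre-model. If $x.(\psi_1\U\psi_2)\in\Delta_i$, \cref{def:pre-model:until} supplies a position $j\ge i$ with $x.\psi_2^{\delta_{i,j}}\in\Delta_j$ and $x.\psi_1^{\delta_{i,k}}\in\Delta_k$ for all $i\le k<j$; the induction hypothesis on these structurally smaller formulae, followed by the shift equivalence above, turns them into $\rho\models^j_{\xi[x\leftarrow\tau_i]}\psi_2$ and $\rho\models^k_{\xi[x\leftarrow\tau_i]}\psi_1$, which is exactly what the semantics of $\U$ requires at position $i$ under the binding $x\leftarrow\tau_i$. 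The operators $\X$, $\Y$, and $\S$ are treated symmetrically by means of \cref{def:pre-model:tomorrow}, \cref{def:pre-model:yesterday}, and \cref{def:pre-model:since}.

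I expect \emph{release} and \emph{triggered} to be the main obstacle, precisely because they are \emph{safety} (greatest-fixpoint) properties for which \cref{def:pre-model} offers no explicit fulfilment clause. Here I would run a secondary induction on the position $l\ge i$, using the release clause of \cref{def:atom} together with \cref{def:pre-model:tomorrow} and the fact that the shift distributes over the connectives and composes additively, so that $x.(\chi^{\delta})^{\delta'}=x.\chi^{\delta+\delta'}$ and the accumulated shifts again telescope to $\delta_{i,l}$. This shows that the unfolding of $x.(\psi_1\R\psi_2)\in\Delta_i$ keeps forcing $x.\psi_2^{\delta_{i,l}}\in\Delta_l$ at every $l\ge i$ until, possibly, the disjunct $\set{x.\psi_1^{\delta_{i,k}},x.\psi_2^{\delta_{i,k}}}\subseteq\Delta_k$ fires at some $k$. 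If it fires, the induction hypothesis on $\psi_1$ and $\psi_2$ yields the second disjunct of the semantics of $\R$ with witness $k$; if it never fires, $\psi_2$ is forced at every position $\ge i$, giving the first disjunct. Triggered is the past-mirror image. With the truth lemma in hand, $\rho\models^0_\xi z.\phi$ follows from \cref{def:pre-model:base}, and $\rho$ witnesses the satisfiability of $z.\phi$.
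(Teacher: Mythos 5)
Your proposal is correct and takes essentially the same route as the paper's proof: it constructs $\rho$ directly from the pre-model and establishes the same truth lemma by induction on a shift-invariant structural measure, using the telescoping identity $\delta_{i,j}=\tau_j-\tau_i$ together with \cref{def:temporal-shift} to convert atom membership into satisfaction, exactly as the paper does. If anything, you are more explicit than the paper, which dismisses release/triggered as ``similar'' to the until case, whereas you correctly observe that \cref{def:pre-model} provides no fulfilment clause for these safety operators and that one therefore needs a secondary induction on positions through the atom clauses, the tomorrow condition, and the additive composition of shifts.
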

\begin{proof}
	Let $\Pi=(\bar\Delta,\bar\iota)$ be a pre-model of $z.\phi$ and let
	$\rho=(\sigma,\tau)$ be a timed state sequence such that $\iota_i=\tau_i$ and
	$x.p\in\Delta_i$ if and only if $\rho\models^i p$. Note that each $\tau$
	satisfies the monotonicity and progress conditions because $\bar\iota$ does by
	definition of pre-model. Then, we show that $\rho\models z.\phi$ and thus the
	formula is satisfiable.

	For any $x.\psi\in\C(z.\phi)$, let the \emph{nesting degree} $\deg(x.\psi)$ of
	$x.\psi$ be defined inductively as follows: $\deg(x.p)=\deg(x.\neg p)=0$ for
	$p\in\AP$, $\deg(x.y.\psi)=\deg(y.\psi)+1$, and $\deg(x
    (\phi_1\circ\phi_2))=\max(\deg(x.\psi_1),$ $\deg(x.\psi_2))+1$,
    with $\circ \in \{\land,\lor,\U,\S,\R,\T\}$.
	We prove by induction on $\deg(x.\psi)$ that if $x.\psi\in\Delta_i$, then
	$\rho\models^i\psi$ for any $x.\psi\in\C(z.\phi)$ and any $i\ge 0$ (since
	all $x.\psi\in\C(z.\phi)$ are closed, we do not need to take care of
	environments). The thesis then follows from
	\cref{def:pre-model:base} of \cref{def:pre-model}, since $z.\phi \in
	\Delta_0$.

    As for the base case, if $x.p\in\Delta_i$ or $x.\neg p\in\Delta_i$, then the
	thesis follows by the definition of $\rho$.

    As for the inductive step, we go by cases:

	\begin{enumerate}
		\item \label{prop:models:base}
					if $x.y.\psi\in\Delta_i$, then $x.\psi[y/x]\in\Delta_i$ and by the
					inductive hypothesis $\rho\models^i x.\psi[y/x]$, thus
					$\rho\models^i x.y.\psi$;
		\item \label{prop:models:connectives}
					if $x.(\psi_1\lor\psi_2)\in\Delta_i$ (resp., $x.(\psi_1\land\psi_2)$),
					then by definition of atom and the inductive hypothesis, either
					$\rho\models^i x.\psi_1$ or $\rho\models^i x.\psi_2$ (resp., both), and thus $\rho\models^i x.(\psi_1\lor\psi_2)$
					(resp.,~$\rho\models^i x.(\psi_1\land\psi_2)$);
		\item \label{prop:models:tomorrow}
					if $x.\X\psi\in\Delta_i$, then, by \cref{def:pre-model:tomorrow} of
					\cref{def:pre-model}, it holds that
					$x.\psi^{\delta_{i+1}}\in\Delta_{i+1}$. Since
					$\deg(x.\psi)<\deg(x.\X\psi)$, by the inductive hypothesis it
					follows that $\rho\models^{i+1}_\xi x.\psi^{\delta_{i+1}}$, for any
					$\xi$. By \cref{def:temporal-shift}, this implies that
					$\rho\models^{i+1}_{\xi[x\takes\tau_{i+1}-\delta_{i+1}]}\psi$, that
					is, $\rho\models^{i+1}_{\xi[x\takes\tau_i]}\psi$. Then, by the
					semantics of the \emph{tomorrow} operator and of the freeze
					quantifier, we have $\rho\models^i_{\xi[x\takes\tau_i]}\X\psi$ and
					thus $\rho\models^i_\xi x.\X\psi$;
		\item \label{prop:models:until}
					if $x.(\psi_1\U\psi_2)\in\Delta_i$, then, by definition of atom,
					there exists $j\ge i$ such that $x.\psi_2^{\delta_{i,j}} \in
					\Delta_j$ and $x.\psi_1^{\delta_{i,k}}\in\Delta_k$, for all $i\le k<j$.
					Then, by the inductive hypothesis, $\rho\models^j_\xi
					x.\psi_2^{\delta_{i,j}}$ and $\rho\models^k_\xi
					x.\psi_1^{\delta_{i,k}}$, for any $\xi$ and all $i\le k<j$. By
					\cref{def:temporal-shift}, we have that
					$\rho\models^j_{\xi[x\takes\tau_j-\delta_{i,j}]}\psi_2$ and
					$\rho\models^k_{\xi[x\takes\tau_k-\delta_{i,k}]}\psi_1$ for all
					$i\le k<j$, that is, $\rho\models^j_{\xi[x\takes\tau_i]}\psi_2$ and
					$\rho\models^k_{\xi[x\takes\tau_i]}\psi_1$ for all $i\le k<j$.
					Finally, by the semantics of the \emph{until} operator and of the
					freeze quantifier, we have
					$\rho\models^i_{\xi[x\takes\tau_i]}\psi_1\U\psi_2$ and thus
					$\rho\models^i_\xi x.(\psi_1\U\psi_2)$;\fitpar
		\item \label{prop:models:others}
					the case when $x.(\psi_1\R\psi_2)\in\Delta_i$ is similar to
					\cref{prop:models:until}, and the cases when $x.\Y\psi\in\Delta_i$,
					$x.(\psi_1\S\psi_2)\in\Delta_i$ or $x.(\psi_1\T\psi_2)\in\Delta_i$ are
					similar and specular to \cref{prop:models:tomorrow,prop:models:until},
					respectively.\qedhere
	\end{enumerate}
\end{proof}

To complete the proof, it suffices to show that a pre-model for a
formula can be obtained from a successful branch of the tableau.

\begin{lemma}
	\label{lemma:branch-to-pre-model}
	Let $z.\phi$ be a closed \TPTL or \TPTLbP formula and $T$ a complete tableau
	for $z.\phi$. If $T$ has a successful branch, then there exists a
	pre-model for $z.\phi$.
\end{lemma}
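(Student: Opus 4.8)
The plan is to read a pre-model off the sequence of temporal states visited along the successful branch, turning the accepting loop into an infinite periodic sequence. Let $\bar u=\seq{u_0,\ldots,u_n}$ be the successful branch and let $s_0<s_1<\cdots<s_M$ be its \emph{step nodes} in order. As the $k$-th atom I take the collected label $\Delta(s_k)$, i.e.\ the union of the labels of all nodes in the segment running from just after the previous step node down to $s_k$ (for $s_0$ the segment starts at the root), and as the $k$-th timestamp I take $\iota_k=\time(s_k)$; thus $\Delta_k=\Delta(s_k)$. A successful branch is ticked either by \Rule{Empty} or by \Rule{Loop_2}. In the \Rule{Loop_2} case there are step nodes $s_l<s_M$ with $\Gamma(s_l)=\Gamma(s_M)$ and $\time(s_l)<\time(s_M)$; I keep the prefix atoms $\Delta(s_0),\ldots,\Delta(s_l)$ and then repeat the block $\Delta(s_{l+1}),\ldots,\Delta(s_M)$ forever, shifting timestamps by the strictly positive period $P=\time(s_M)-\time(s_l)$ at each turn. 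The wrap-around transition from $\Delta(s_M)$ back to $\Delta(s_{l+1})$ is legitimate precisely because $\Gamma(s_M)=\Gamma(s_l)$, so the \Rule{Step} rule fires identically out of $s_M$ and out of $s_l$. In the \Rule{Empty} case I instead append infinitely many empty atoms, advancing the timestamp by at least one unit each time. In both cases $\bar\iota$ is monotone and satisfies \emph{progress}, since $P$ (resp.\ the chosen increment) is positive.

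Next I verify that each $\Delta(s_k)$ is an atom per \cref{def:atom}. Two facts about the expansion phase do the work. Elementary formulae are never the target of an expansion rule, hence persist unchanged from any node of a segment to its poised leaf; consequently, for elementary formulae, membership in $\Delta(s_k)$ coincides with membership in the poised label $\Gamma(s_k)$. Moreover a formula leaves a label only by being expanded, and each segment ends in a poised leaf, so every non-elementary formula occurring in the segment is expanded before the leaf is reached; thus $\Delta(s_k)$ is closed under the expansions of \cref{tab:expansion-rules}, with the branching rules contributing exactly the disjunct chosen along the branch. These are precisely the implications required by \cref{def:atom}, and the consistency clause ($x.p$ and $x.\neg p$ not both present) holds because, the branch being successful, neither \Rule{Contradiction} nor \Rule{Sync} fired at any of its poised leaves.

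It remains to check conditions \cref{def:pre-model:base,def:pre-model:tomorrow,def:pre-model:yesterday,def:pre-model:since} of \cref{def:pre-model}; the until condition is treated separately below. \Cref{def:pre-model:base} is immediate from $\Gamma(u_0)=\set{z.\phi}$. \Cref{def:pre-model:tomorrow} is exactly the action of \Rule{Step}: if $x.\X\psi\in\Delta_k$ then, being elementary, $x.\X\psi\in\Gamma(s_k)$, and the child created by \Rule{Step} that is followed on the branch, corresponding to the increment $\delta_{k+1}=\iota_{k+1}-\iota_k$, places $x.\psi^{\delta_{k+1}}$ into the next segment, hence into $\Delta_{k+1}$. \Cref{def:pre-model:yesterday} is enforced by \Rule{Yesterday}: since the branch is not rejected, this rule succeeds at every poised leaf of the branch, giving $x.\psi^{-\delta_k}\in\Delta_{k-1}$ for each $x.\Y\psi\in\Gamma(s_k)$, and it also forbids a \emph{yesterday} request at the first step node, so $x.\Y\psi\in\Delta_i$ forces $i>0$. \Cref{def:pre-model:since} then follows by a finite backward induction: an unsatisfied since at state $i$ forces, through the atom structure, the request $x.\Y(\psi_1\S\psi_2)$, which \cref{def:pre-model:yesterday} pushes one step into the past with the cumulated shift; the descent is strictly decreasing and cannot reach a request at state $0$ (that would have been rejected by \Rule{Yesterday}), so the since is satisfied at some $j\le i$.

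The main obstacle is the until condition \cref{def:pre-model:until}, which is where the accepting loop is essential. If $x.(\psi_1\U\psi_2)\in\Delta_i$ and $\psi_2$ already holds there, then $j=i$ works; otherwise the atom forces $x.\psi_1$ and the $\X$-eventuality $x.\X(\psi_1\U\psi_2)$, and \Rule{Step} carries a shifted copy of the until into the next state. Suppose, toward a contradiction, that the until is never satisfied from $i$ onward: then it is deferred at every later state, so $\psi_1$ holds throughout and the $\X$-eventuality is requested at each of those states. In the \Rule{Loop_2} unrolling every index beyond the prefix lies in the periodic block, so the cycling reaches a copy of $s_l$ at which the (shifted) until is still deferred, placing its $\X$-eventuality in $\Gamma(s_l)$; but \Rule{Loop_2} guarantees that every $\X$-eventuality requested at $s_l$ is fulfilled between $s_l$ and $s_M$, contradicting the assumption. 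In the \Rule{Empty} case an unfulfilled until would instead force an $\X$-eventuality into the empty final label, which is impossible. Because the tableau notions of \emph{requested} and \emph{fulfilled} $\X$-eventuality use the same cumulative shifts $\delta_{i,j}=\time(s_j)-\time(s_i)$ as \cref{def:pre-model}, the witnessing index and shifted subformulae match \cref{def:pre-model:until} verbatim. Finally, the \emph{release} and \emph{triggered} operators need no dedicated clause: their deferred forms are transported by the already-established \emph{tomorrow} and \emph{yesterday} conditions, exactly as in the proof of \cref{lemma:pre-model-to-model}. Aligning this shift bookkeeping across the unrolled loop is the only delicate point; everything else is a routine match against \cref{def:atom,def:pre-model}.
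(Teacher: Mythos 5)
Your construction follows the paper's own strategy closely: read the atoms off the step nodes of the successful branch, unroll the \Rule{Loop_2} loop periodically (or pad the \Rule{Empty} tail), and discharge the pre-model conditions against the \Rule{Step}, \Rule{Yesterday} and \Rule{Loop_2} rules; your treatment of the \emph{until} condition is in fact more explicit than the paper's. The essential difference is that you take as atoms the \emph{raw} segment unions $\Delta(s_k)$, whereas the paper takes the set obtained from the label by adding missing literals and \emph{closing it under the requirements of \cref{def:atom}}. This is not cosmetic. First, the raw unions need not be atoms at all: per the paragraph following \cref{def:atom}, an atom must also contain every non-elementary formula that is \emph{justified} by the set (and be complete on literals of the closure), while your verification establishes only the downward implications plus consistency.

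More importantly, the missing closure step is exactly what makes the loop wrap-around sound for \emph{past} requests, and there your argument has a genuine gap. At the seam of the periodic unrolling, the atom preceding a copy of $\Delta(s_{l+1})$ is $\Delta(s_M)$; if $x.\Y\psi\in\Gamma(s_{l+1})$, the \Rule{Yesterday} rule (checked when $s_{l+1}$ was a poised leaf) only guarantees $x.\psi^{-\delta_{l+1}}\in\Delta(s_l)$. Since $\psi$ may be non-elementary, $\Gamma(s_l)=\Gamma(s_M)$ does \emph{not} imply $\Delta(s_l)=\Delta(s_M)$: the two segments may pass through different intermediate non-elementary formulae on their way to the same poised label (one segment may contain a disjunction, the other only its chosen disjunct), so $x.\psi^{-\delta_{l+1}}$ can lie in $\Delta(s_l)\setminus\Delta(s_M)$ and \cref{def:pre-model:yesterday} of \cref{def:pre-model} fails at the seam. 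Your wrap-around argument is spelled out only for \emph{tomorrow} formulae via \Rule{Step}, where equality of the poised labels does suffice; the symmetric claim for \emph{yesterday}, hidden in ``this rule succeeds at every poised leaf'', is precisely the step that breaks. (The problem concerns only \TPTLbP; for \TPTL, which has no past operators, your proof is fine.) The repair is the paper's completion step: define each atom as the closure of the poised label under \cref{def:atom}. That closure contains the entire segment union (every formula of a segment is recoverable bottom-up from its elementary residue in the poised label, using the ``justified implies present'' direction), and it is literally the same set at $s_l$ and $s_M$ because $\Gamma(s_l)=\Gamma(s_M)$; hence the \Rule{Yesterday} guarantee transfers across the seam, which is exactly how the paper's proof closes this case.
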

\begin{proof}
	Let $\bar u=\seq{u_0,\ldots,u_n}$ be a successful branch of $T$ and let
	$\bar\pi=\seq{\pi_0,\ldots,\pi_m}$ be the subsequence of step nodes of $\bar
	u$. Intuitively, a pre-model for $z.\phi$ can be obtained from $\bar u$ by
	building the atoms from the labels of the step nodes, and extending them to an
	infinite sequence. Let $\Delta(\pi_i)$ be the atom obtained from $\Gamma(\pi_i)$
    by arbitrarily completing it with missing literals and closing it over the
	requirements of \cref{def:atom}. The sequence of
	$\Delta(\pi_i)$, with $0\le i\le m$, forms the basic skeleton of the pre-model
	$\Pi=(\bar\Delta,\bar\iota)$ defined as follows. As for the atoms,
	$\Delta_i=\Delta(\pi_{\K(i)})$, where $\K: \N \to \set{0,\ldots,m}$, is defined
	differently depending on which rule caused the branch to be accepted:
	\begin{enumerate}
	\item if $\pi_m$ was ticked by the \Rule{Loop_2} rule, then there
				exists $k<m$ such that $\Gamma(\pi_k)=\Gamma(\pi_m)$ and all
				the $\X$-eventualities requested in $\pi_k$ are fulfilled
				between $\pi_k$ and $\pi_m$. Then, the pre-model repeats
				forever the atoms between $\Delta(\pi_{k+1})$ and $\Delta(\pi_m)$, and
				thus $\K(i)= i$, for $0 \leq i < k$, and $\K(i)=k+((i-k) \mod T)$, with
                $T=m-k$, for $i \geq k$;
	\item if $\pi_m$ was ticked by the \Rule{Empty} rule, then
				$\Gamma(\pi_m)=\emptyset$ and the pre-model repeats forever the
				atom $\Delta(\pi_m)$, hence $\K(i)=i$ if $i< m$, and $\K(i)=m$ if
				$i\ge m$.
	\end{enumerate}
	As for the sequence of timestamps, it is taken directly from the step nodes
	accordingly:
	\begin{enumerate}
	\item if $\pi_m$ was ticked by the \Rule{Loop_2} rule, then
				$\iota_i=\time_{i-1}+(\time(\pi_{\K(i)})-\time(\pi_{\K(i)-1}))$ for all
				$i\ge0$;
	\item if $\pi_m$ was ticked by the \Rule{Empty} rule, then
				$\iota_i=\time(\pi_i)$ for $i\le m$, and $\iota_{i+1}=\iota_i+1$
				for all $i > m$.
	\end{enumerate}

	We now show that $\Pi$ is indeed a pre-model for $z.\phi$.
	First, note that, by construction, $\bar\iota$ satisfies the \emph{progress} and
	\emph{monotonicity} conditions (in particular, \Rule{Loop_2} rule ensures that
    $\time(\pi_m)>\time(\pi_k)$. Then, observe that $z.\phi\in\Delta_0$ because
    $z.\phi\in\Gamma(\pi_0)$ by construction, and thus \cref{def:pre-model:base} of
    \cref{def:pre-model} is satisfied.

	Consider now any formula $x.\X\psi\in\Delta_i$. Being an
	elementary formula, we know that $x.\X\psi\in\Gamma(\pi_{\K(i)})$. Two cases
	have to be considered. If $\pi_{\K(i+1)}=\pi_{\K(i)+1}$, \ie the next atom
	comes from the actual successor of the current one in the tableau branch, then,
	by the \Rule{Step} rule, $x.\psi^{\delta_{i+1}}\in\Delta_{i+1}$.
	Otherwise, $\Delta_i=\Delta(\pi_m)$ and $\pi_m$ was ticked by the
	\Rule{Loop_2} (because $\Delta_i$ is not empty), and thus
	$\Delta_{i+1}=\Delta(\pi_{k+1})$ for some $k<m$ such that
	$\Gamma(\pi_k)=\Gamma(\pi_m)$. Hence, $x.\X\psi \in \Gamma(\pi_k)$
	as well, and, by the \Rule{Step} rule applied to $\pi_k$,
	$x.\psi^{\delta_{i+1}} \in \Delta(\pi_{k+1})=\Delta_{i+1}$, and thus
	\cref{def:pre-model:tomorrow} of \cref{def:pre-model} is satisfied.

	Finally, consider any formula $x.\Y\psi\in\Delta_i$ and thus
	$x.\Y\psi\in\Gamma(\pi_{\K(i)})$. By the \Rule{Yesterday} rule,
	$i>0$. As in the previous case, either $\Delta_{i-1}$ is the atom
	coming from the previous step node, and thus
	$x.\psi^{-\delta_i}\in\Delta_{i-1}$ by the \Rule{Yesterday} rule, or
	$\Delta_i=\Delta(\pi_{k+1})$ for some $k$ that triggered the \Rule{Loop_2}
	rule because $\Gamma(\pi_k)=\Gamma(\pi_m)$. By the \Rule{Yesterday} rule,
	$x.\Y\psi^{-\delta_{k+1}}\in\Delta_k$, and, since
	$\delta_{k+1}=\delta_i$, $x.\Y\psi^{-\delta_i}\in\Delta_m=\Delta_{i-1}$.

    The other cases
	are straightforward in view of how expansion rules are defined.
\end{proof}

\begin{theorem}[Soundness]
	Let $z.\phi$ be a closed \TPTL \ / \TPTLbP formula, and let $T$ be a complete
	tableau for $z.\phi$. If $T$ has a successful branch, then $z.\phi$ is
	satisfiable.
\end{theorem}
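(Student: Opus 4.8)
The plan is to prove the theorem as an immediate composition of the two lemmas established just above, which together cover exactly the two halves of the desired implication. First I would apply \cref{lemma:branch-to-pre-model} to the hypothesis: since $T$ is a complete tableau for $z.\phi$ that, by assumption, contains a successful branch, that lemma yields a pre-model $\Pi=\pair{\bar\Delta,\bar\iota}$ of $z.\phi$. This step converts the operational, branch-level notion of a successful branch into the purely semantic object that witnesses satisfiability.

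Next I would feed $\Pi$ into \cref{lemma:pre-model-to-model}: the existence of a pre-model for the closed formula $z.\phi$ directly entails that $z.\phi$ is satisfiable, which is precisely the conclusion we want. No separate treatment of \TPTL and \TPTLbP is required at this level, since both lemmas are stated uniformly for closed \TPTL~/~\TPTLbP formulae, and the only logic-dependent ingredient---the choice of temporal shift operator $\cdot^\delta$---has already been fixed inside the closure $\C(z.\phi)$ and absorbed into the statements of the two lemmas. Chaining the two implications therefore gives the theorem in a single line.

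Consequently, I do not expect any genuine obstacle at the level of the theorem itself: all the real content has been front-loaded into the lemmas. If I had to point to where the difficulty lies, it is entirely inside \cref{lemma:branch-to-pre-model}, namely correctly unfolding the finitely many step nodes of the successful branch into an infinite sequence of atoms that respects the \emph{tomorrow} and \emph{yesterday} coherence conditions of \cref{def:pre-model} as well as the \emph{progress} requirement on $\bar\iota$, uniformly across the \Rule{Loop_2} and \Rule{Empty} acceptance cases; and inside \cref{lemma:pre-model-to-model}, namely the induction on nesting degree showing that membership of $x.\psi$ in an atom $\Delta_i$ implies $\rho\models^i\psi$ in the induced timed state sequence. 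Given those two facts, which are available to us as already proved, the theorem follows immediately.
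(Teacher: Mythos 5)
Your proposal is correct and is exactly the paper's own proof: the paper likewise obtains a pre-model from the successful branch via \cref{lemma:branch-to-pre-model} and then converts it into a model via \cref{lemma:pre-model-to-model}. Your additional remarks about where the real work lies (inside the two lemmas) match the paper's structure, which indeed front-loads all content into those lemmas and leaves the theorem as a two-step composition.
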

\begin{proof}
	Extract a pre-model for $z.\phi$ from the successful branch of $T$ as in
	\cref{lemma:branch-to-pre-model}, and then obtain from it an actual model for
	the formula as in \cref{lemma:pre-model-to-model}.
\end{proof}

\subsection{Completeness}
We now prove the completeness of the tableau system, i.e., if a formula
$z.\phi$ is satisfiable, then any complete tableau $T$ for it
has an accepting branch. We make use of a new model-theoretic argument providing a much
simpler and shorter proof, which sidesteps the complex combinatorial argument used
in completeness proofs for the one-pass tree-shaped tableaux for
\LTL~\cite{Reynolds16a} and \LTLP~\cite{GiganteMR17}.

To start with, we introduce the key concept of \emph{greedy pre-model}. Given a pre-model
$\Pi=\pair{\bar\Delta,\bar\iota}$, an $\X$-eventuality
$x\psi = x.\X(\psi_1\U\psi_2)$ is \emph{requested} at position $i\ge0$ if
$x.\psi\in\Delta_i$, and \emph{fulfilled} at $j > i$ if $j$ is the
first position where $x.\psi_2^{\delta_{i,j}}\in\Delta_j$ and
$x.\psi_1^{\delta_{i,k}}\in\Delta_k$, for all $i<k<j$. Let
$\E(z.\phi)=\{x.\psi\in\C(z.\phi)\mid\text{$x.\psi$ is an $\X$-eventuality}\}$.
For each position $i\ge0$, we define the \emph{delay vector at position $i$} as a
function $\delay_i:\E(z.\phi)\to\N$ providing a natural number for each
eventuality in $\E(z.\phi)$, as follows:
\begin{equation*}
 \delay_i(x.\psi) =
 \begin{cases}
   0 & \text{if $x.\psi$ is not requested at position $i$}\\
   n & \text{if $x.\psi$ is requested at $i$ and fulfilled at $j$
	 such that $n=j-i$}
 \end{cases}
\end{equation*}

Intuitively, $\delay_i(x.\psi)$ is the number of states
elapsed between the request and the fulfilment of $x.\psi$. We
denote as $\bar\delay=\seq{\delay_0,\delay_1,\ldots}$ the sequence of delay
vectors of the atoms of $\bar\Delta$, and define $\delay_i\preceq \delay_i'$ if and only if
$\delay_i(\psi)\le \delay_i'(\psi)$, for all $\psi\in\E(\phi)$. A pre-order
relation on pre-models of a given formula can be defined by comparing the
$\delay_i$ lexicographically: $\Pi\preceq\Pi'$ if $\delay_0 < \delay_0'$ or
$\delay_0=\delay_0'$ and $\Pi_{\ge1}\preceq\Pi'_{\ge1}$, where
$\Pi_{\ge1}=\pair{\bar\Delta_{\ge1},\bar\iota_{\ge1}}$ with
$\bar\Delta_{\ge1}=\seq{\Delta_1,\Delta_2,\ldots}$ and
$\bar\iota_{\ge1}=\seq{\iota_1,\iota_2,\ldots}$. Greedy pre-models are minimal
elements of this pre-order. We show that if a formula admits a pre-model, then
it admits a greedy pre-model. The completeness result can then be proved
directly.

\begin{definition}[Greedy pre-models]
	\label{def:greedy-pre-models}
  Let $\Pi$ be a pre-model for a formula $z.\phi$. $\Pi$ is \emph{greedy} if
  there is no pre-model $\Pi' \neq \Pi$ such that $\Pi'\preceq\Pi$.
\end{definition}

\begin{lemma}
	\label{lemma:greedy-pre-models}
	Let $\Pi$ be a pre-model for a formula $z.\phi$. Then, there is a
	\emph{greedy} pre-model $\Pi'\preceq\Pi$.
\end{lemma}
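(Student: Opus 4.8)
The plan is to construct a greedy pre-model below $\Pi$ by a coordinate-wise, diagonal minimisation of the delay vectors, and then to verify that the resulting object is a genuine pre-model and is $\preceq$-minimal. The first point I would stress is \emph{why} a direct argument fails: the pre-order $\preceq$ compares the infinite sequences $\bar\delay$ lexicographically, and the lexicographic product of infinitely many copies of $(\N^{\E(z.\phi)},\preceq)$ is \emph{not} well-founded (the chain $(1,0,0,\dots)\succ(0,1,0,\dots)\succ(0,0,1,\dots)\succ\cdots$ descends forever). Hence one cannot simply pick a $\preceq$-minimum below $\Pi$; the minimal element must be built position by position and obtained as a limit. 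What \emph{is} available, and what drives the construction, is that each single coordinate lives in $(\N^{\E(z.\phi)},\preceq)$, the componentwise order on a finite power of $\N$, which is well-founded (indeed a well-partial-order by Dickson's lemma): every nonempty set of delay vectors has a $\preceq$-minimal element.

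Concretely, I would build a sequence $\Pi=\Pi^{(0)}\succeq\Pi^{(1)}\succeq\cdots$, each $\preceq\Pi$, maintaining the invariant that at stage $k$ the prefix $\Delta_0,\dots,\Delta_{k-1}$ together with the delay vectors $\delay_0,\dots,\delay_{k-1}$ is \emph{frozen} and can no longer be lowered. At stage $k$ I consider the nonempty set of pre-models $\preceq\Pi$ that realise the already-frozen prefix and its delays, read off the set of possible values of $\delay_k$, pick a $\preceq$-minimal one $d_k$ (using well-foundedness of the single coordinate), and fix an atom $\Delta_k$ and timestamp $\iota_k$ witnessing it. The limit $\Pi'$ is defined by letting position $i$ take the value frozen at stage $i+1$; since positions stabilise, $\Pi'$ is well-defined, and the same bookkeeping that keeps $\delay_0,\dots,\delay_{k-1}$ fixed guarantees $\delay_i(\Pi')=d_i$ for every $i$.

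The verification then splits into three parts. That $\Pi'$ is a \emph{pre-model} (\cref{def:pre-model}): the atom conditions and the local \emph{tomorrow}/\emph{yesterday} conditions relate only adjacent positions, all of which are eventually frozen and inherited from some $\Pi^{(k)}$; the delicate conditions are \cref{def:pre-model:until} and \cref{def:pre-model:since}, for which I must show that every requested $\U$-eventuality is still fulfilled at a \emph{finite} position. This is exactly where freezing the delay vectors, and not merely the atoms, pays off: an eventuality $\psi$ requested at $i$ is fulfilled after $d_i(\psi)<\infty$ steps, its fulfilment is witnessed entirely within positions $i,\dots,i+d_i(\psi)$, and by the stage at which position $i+d_i(\psi)$ is frozen that whole witness lies in the fixed prefix, hence in $\Pi'$. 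That $\Pi'\preceq\Pi$ follows because at each coordinate we only ever descended relative to, or stayed equal to, $\Pi$. Finally, \emph{greediness}: if some $\Theta\neq\Pi'$ had $\Theta\preceq\Pi'$, then at the first coordinate $i$ where their delay vectors differ we would have $\delay_i(\Theta)\prec d_i$ while $\Theta$ realises the same frozen prefix delays $d_0,\dots,d_{i-1}$, contradicting the $\preceq$-minimal choice of $d_i$.

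I expect the main obstacle to be precisely the reconciliation in the third paragraph: making the coordinate-wise minimisation compatible with the existence of a \emph{single} pre-model realising all the chosen delays simultaneously. The danger is that greedily fulfilling the eventualities requested at early positions may force later eventualities to be postponed, so that the sequence of locally optimal choices is not globally realisable, or that some until-request is pushed off to infinity in the limit. Carrying, as an explicit invariant, that every finite prefix of frozen atoms-plus-delays is \emph{extendable} to a full pre-model $\preceq\Pi$, and checking that this invariant survives the passage from stage $k$ to stage $k+1$, is what turns the limit into a bona fide pre-model and constitutes the technical heart of the argument.
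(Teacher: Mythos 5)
Your guiding instinct --- that the greedy pre-model must be built as a limit, exploiting well-foundedness of the componentwise order on $\N^{|\E(z.\phi)|}$ one position at a time, because the lexicographic order on infinite delay sequences is not well-founded --- is the same engine that drives the paper's proof (which extracts the limit from a maximal $\succ$-descending chain of whole pre-models and shows, position by position, that finite prefixes stabilise along the chain). However, your concrete construction has a genuine gap in the greediness step, and it is not merely presentational. At stage $k$ you minimise $\delay_k$ over the pre-models that realise the frozen \emph{atoms} $\Delta_0,\dots,\Delta_{k-1}$ as well as the frozen delays, and you then freeze an \emph{arbitrary} atom witnessing $d_k$. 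But greediness quantifies over \emph{all} pre-models: a competitor $\Theta\preceq\Pi'$ may realise the delays $d_0,\dots,d_{i-1}$ with a completely different atom prefix, and then $\delay_i(\Theta)\prec d_i$ contradicts nothing, since $d_i$ was minimal only relative to your frozen atoms. The failure is real, because two atoms can witness the same minimal delay at a position while constraining future delays differently. Concretely (an \LTL instance suffices, \LTL being a fragment of \TPTL): for $\phi=\bigl(a\land\X(a\land\X(\true\U p))\bigr)\lor\bigl(\neg a\land\X(\neg a\land\X\X(\true\U p))\bigr)$ the only $\X$-eventuality is $\X(\true\U p)$, and both disjuncts admit position-$0$ atoms with the all-zero delay vector; but if stage $0$ happens to freeze the atom containing the first disjunct, then every pre-model realising it is forced to request the eventuality at position $1$, so your limit has delays $\seq{0,1,0,\dots}$, whereas a pre-model built on the second disjunct has delays $\seq{0,0,1,0,\dots}$, which is lexicographically strictly smaller. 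Hence the limit of your construction need not be greedy, and your own verification breaks exactly at the quantifier over $\Theta$.

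The repair is nontrivial: you must freeze only the delay vectors (minimising at stage $k$ over all pre-models $\preceq\Pi$ realising $d_0,\dots,d_{k-1}$, with no atom constraint) and then recover a single compatible sequence of atoms and timestamps afterwards, e.g.\ by a K\"onig-style argument using the finiteness of $\C(z.\phi)$; the paper avoids the issue altogether by descending along chains of complete pre-models rather than freezing prefixes, so that every stabilised position carries the atoms and timestamps of an actual pre-model in the chain. A second, smaller omission: you never check that the limit's timestamp sequence satisfies the \emph{progress} condition of \cref{def:pre-model}. Delay minimisation says nothing about time, so an unlucky choice of witnesses can freeze $\iota_k=\iota_{k-1}$ at every stage, in which case the limit fails to be a pre-model at all; your remark that the local conditions are inherited from the witnesses covers monotonicity and the tomorrow/yesterday clauses, but progress is a property of the whole infinite sequence and needs its own argument.
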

\begin{proof}
	We distinguish two cases. If there is a finite sequence $\Pi_1 \
    (= \Pi) \succ \Pi_2 \succ \ldots \succ \Pi_n$, with $n \geq 1$, which
    is maximal with respect to $\succ$, i.e., it cannot be further extended,
    then $\Pi' = \Pi_n$ is a greedy model with $\Pi'\preceq\Pi$.
    Otherwise, let $\Pi_1 \ (= \Pi) \succ \Pi_2 \succ \ldots$ be
    an infinite sequence of pre-models. We prove that
	its limit
	is
a greedy model $\Pi'$.
    To this end, it suffices to show that for every $n \in \N$ (prefix length), there
    is $m \in \N$ (pre-model index) such that the prefix up to position
    $n$ of pre-models $\Pi_m, \Pi_{m+1}, \ldots$ is the same.

    For $i \geq 1$, let $\delay^i=\seq{\delay^i_0,\delay^i_1,\ldots}$
    be the sequence of delay vectors of $\Pi_i$. Let us consider the $j$-th
    pre-model $\Pi_j$, for some $j \geq 1$. By definition of $\succ$, there
    is a position $n_j \ge 0$ such that $\delay^{j+1}_{n_j} < \delay^j_{n_j}$,
    and $\delay^{j+1}_m = \delay^j_m$, for all $0\le m < n_j$. We show that
    there are finitely many indexes $l > j$ (let $\overline{l}$ be the largest
    one) for which there exists a position $n_k$, with $n_k \leq n_j$, such
    that $\delay^{l+1}_{n_k} < \delay^l_{n_k}$, and $\delay^{l+1}_m =
    \delay^j_m$, for all $0\le m < n_k$.
    We prove it by contradiction. Assume that there are infinitely many. Let $n_h$
    be the leftmost position that comes into play infinitely many times.
    If $n_h = 0$, then there is an infinite strictly decreasing sequence of delay
    vectors $\delay^{h_1}_{0}>\delay^{h_2}_{0}>\delay^{h_3}_{0}>\ldots$, with
    $j < h_1 < h_2 < h_3 < \ldots$, which cannot be the case since the ordered set
    $(\N^{|\E(z.\phi)|},\le)$ is well-founded (the definition of temporal shift
    operators ensures that the closure set of $z.\phi$ is finite, and thus $\E(z.\phi)$
    is finite as well). Let $0 < n_h \leq n_j$. Since the positions
    to the left of $n_h$ are chosen only finitely many times, there exists a tuple
    $(\delay_0, \ldots, \delay_{n_h-1})$ which is paired with an infinite strictly decreasing
    sequence of delay vectors $\delay^{h_1}_{n_h}>\delay^{h_2}_{n_h}>\delay^{h_3}_{n_h}>
    \ldots$, with $j < h_1 < h_2 < h_3 < \ldots$, which again cannot be the case since
    the ordered set $(\N^{|\E(z.\phi)|},\le)$ is well-founded. This allows us to conclude
    that the prefix up to position $n_j$ of all pre-models of index greater than or equal
    to $\overline{l}$ is the same.
\end{proof}

\begin{theorem}[Completeness]
	Let $z.\phi$ be a closed \TPTL \ / \TPTLbP formula and let $T$ be a complete
	tableau for $z.\phi$. If $z.\phi$ is satisfiable, then $T$ contains a
	successful branch.
\end{theorem}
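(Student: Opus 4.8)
The plan is to convert the satisfiability hypothesis into a \emph{greedy} pre-model and then use it as an oracle that selects an accepting branch of the given complete tableau $T$. First I would invoke \cref{prop:delta-bounded-models}: a satisfiable $z.\phi$ admits a $\delta_{z.\phi}$-bounded model $\rho=\pair{\sigma,\tau}$. Reading off $\rho$, for each $i\ge0$ the set of all $x.\psi\in\C(z.\phi)$ with $\rho\models^i x.\psi$ is an atom in the sense of \cref{def:atom}, and the sequence of these atoms paired with $\tau$ satisfies every clause of \cref{def:pre-model}; this is just the converse of \cref{lemma:pre-model-to-model} and is routine. Hence $z.\phi$ has a pre-model, and by \cref{lemma:greedy-pre-models} it has a greedy one $\Pi=\pair{\bar\Delta,\bar\iota}$, whose consecutive timestamp gaps are still bounded by $\delta_{z.\phi}$.

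Next I would descend through $T$ guided by $\Pi$, building a branch $\bar u$ node by node. At a branching expansion step (\cref{tab:expansion-rules}) I pick the child whose added formulae all lie in the current atom $\Delta_i$; such a child exists because $\Delta_i$ is an atom, and it occurs in $T$ since $T$ is complete. At a poised node I fire the \Rule{Step} rule and follow the child $u_\delta$ with $\delta=\iota_{i+1}-\iota_i$, available because $\Pi$ is $\delta_{z.\phi}$-bounded. In this way the elementary formulae of the $i$-th step node of $\bar u$ coincide with those of $\Delta_i$. By \cref{thm:tptl-termination} the branch $\bar u$ is finite and ends in a leaf $w$, which is either ticked or crossed, so it remains to exclude every rejecting rule at $w$. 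The \Rule{Contradiction} and \Rule{Sync} rules cannot apply, since the atoms come from the genuine model $\rho$ and thus contain neither complementary literals nor arithmetically false single-variable constraints; the \Rule{Yesterday} rule cannot reject, because \cref{def:pre-model:yesterday} of \cref{def:pre-model} guarantees that each past request $x.\Y\psi$ of a step node is matched by $x.\psi^{-\delta}$ in the preceding step node, so the inclusion $\Omega\subseteq\Delta(u)$ tested by the rule holds along $\bar u$.

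The heart of the proof, and the step I expect to be the main obstacle, is excluding \Rule{Prune}; this is exactly where greediness is decisive. Suppose for contradiction that \Rule{Prune} fired at $w$ with step nodes $u<v<w$ of equal label, corresponding to positions $a<b<c$ of $\Pi$. Since the branch passed $v$ without the loop rule firing, some $\X$-eventuality $e$ requested at $a$ (i.e. present in $\Delta_a$) is not fulfilled in $[a,b]$; the key observation is that ``pending at $a$'' means exactly ``in $\Delta_a$'', hence ``requested at $a$'', so the subsumption condition of \Rule{Prune} covers $e$ and forbids its first fulfilment from lying in $(b,c]$. Thus $e$ is fulfilled only after $c$. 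Now excise the segment between positions $b$ and $c$: because $\Delta_b$ and $\Delta_c$ carry the same elementary formulae, their $\X$-requests agree and the tail after $c$ can be reattached at $b$. The same ``pending $=$ requested'' remark shows that no eventuality is postponed by this excision (any eventuality pending across the cut is in $\Delta_b=\Delta_c$ and is refulfilled by the reattached tail, while those requested at $a$ and fulfilled in $(b,c]$ are, by subsumption, already fulfilled in $[a,b]$), whereas $e$ is advanced by $c-b$. Therefore at the first position where the delay vectors change the new vector is pointwise $\le$ and strictly below the old one, giving a pre-model strictly under $\Pi$ in the $\preceq$ order and contradicting \cref{def:greedy-pre-models}. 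As for the loop rule, the repeated step nodes of a first loop closure carry the timestamps of $\Pi$, which satisfy \emph{progress}; over the ultimately periodic behaviour of the greedy pre-model the period has strictly positive duration, so time has advanced and the accepting rule \Rule{Loop_2}, not \Rule{Loop_1}, applies.

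Having excluded all rejecting rules, $w$ must be ticked, by \Rule{Empty} or \Rule{Loop_2}, so $\bar u$ is a successful branch of $T$, as required. The delicate points, which is where I expect the real work to lie, are all internal to the \Rule{Prune}/greediness argument: verifying that the excised sequence is again a \emph{valid} pre-model — in particular that the \emph{tomorrow} and, for \TPTLbP, the \emph{yesterday} conditions survive reattachment despite the change of temporal shifts, and that the re-indexed timestamps still satisfy \emph{monotonicity} and \emph{progress} — and that the excision strictly decreases the lexicographic delay order. It is precisely this minimality argument that replaces the intricate combinatorial analysis used in the completeness proofs for \LTL and \LTLP.
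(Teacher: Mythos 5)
Your overall strategy is the paper's own: turn the model into a pre-model, pass to a greedy one via \cref{lemma:greedy-pre-models}, descend through $T$ using it as an oracle while maintaining the invariant that node labels are contained in the corresponding atoms, and rule out every crossing rule at the final leaf, with \Rule{Prune} excluded by exactly the excision-versus-greediness contradiction that the paper uses (your case analysis of the delay vectors matches the paper's almost clause by clause).

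However, your handling of the \Rule{Yesterday} rule has a genuine gap, and it is precisely the point where the paper's descent differs from yours. You argue that the test $\Omega\subseteq\Delta(u)$ passes because \cref{def:pre-model:yesterday} of \cref{def:pre-model} puts $x.\psi^{-\delta_i}$ into the atom $\Delta_{i-1}$. But $\Delta(u)$ is not an atom of the pre-model: it is the union of the \emph{labels} of the branch nodes between the previous step node and $u$, and the descent's invariant gives only the one-sided inclusion $\Gamma(u_k)\subseteq\Delta_{J(k)}$. Indeed, your supporting claim that ``the elementary formulae of the $i$-th step node of $\bar u$ coincide with those of $\Delta_i$'' is false: branch labels contain only formulae produced by expanding $z.\phi$ and the successive step labels, whereas the atoms of a pre-model built from a model contain \emph{every} closure formula true at that state, so in general $\Gamma(u_k)\subsetneq\Delta_{J(k)}$. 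Consequently a past request $x.\Y\psi$ at the next step node can have $x.\psi^{-\delta}$ in the atom yet not in $\Delta(u)$, and your branch then ends in a leaf crossed by \Rule{Yesterday}. The paper closes this hole with a mechanism your descent never invokes: a failed \Rule{Yesterday} instance adds to the previous step node a ``retry'' child $u'$ with $\Gamma(u')=\Gamma(u)\cup\Omega$, and the paper's descent, at each poised node, \emph{prefers} any such Yesterday-added child whose label satisfies the invariant (which is possible because $\Omega\subseteq\Delta_{J(i)}$ by \cref{def:pre-model:yesterday}), taking a \Rule{Step} child only when no such child exists. This folds the pre-model's past obligations into the branch segment \emph{before} advancing, so the later \Rule{Yesterday} test succeeds. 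Without this, your proof breaks for \TPTLbP (for pure \TPTL the issue is vacuous, since the rule never fires). The other spots you flag as delicate—reattachment across the cut and $\delta_{z.\phi}$-boundedness of the greedy pre-model—are treated no more rigorously by the paper than by you, so they are not differential gaps.
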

\begin{proof}
	Let $\rho=\pair{\sigma,\tau}$ be a model for $z.\phi$. It is straightforward
    to build a pre-model for $z.\phi$ from $\rho$. Then, given a pre-model for $z.\phi$,
    \cref{lemma:greedy-pre-models} ensures that a \emph{greedy} pre-model for
    it exists. We can thus restrict our attention to greedy pre-models.
    Let $\Pi=\seq{\bar\Delta,\bar\iota}$ be a greedy pre-model for $z.\phi$. We look for a successful branch
    in $T$ by using $\Pi$ as a guide to descend down the tree until a leaf is
    found, showing that any leaf found in this way must be ticked. The descent
    proceeds as follows.
    At each step $i\ge0$, we maintain a sequence of nodes (which will be the prefix
	of some branch of the tree) $\bar u_i=\seq{u_0,u_1,\ldots,u_i}$ that is
	extended to $\bar u_{i+1}=\seq{u_0,u_1,\ldots,u_i,u_{i+1}}$ by choosing
	$u_{i+1}$ among the children of $u_i$. A map $J:\N\to\N$ is built during the
	descent, where initially $J(0)=0$, which links each nodes in $\bar u_i$ to a
	position in the pre-model by maintaining the invariant that if
	$x.\psi\in\Gamma(u_k)$, then $x.\psi\in\Delta_{J(k)}$, for each $0\le k\le i$ and
	each $x.\psi\in\C(z.\phi)$. At each step $i\ge0$, $u_{i+1}$ is chosen among
	the children of $u_i$ in the following way: if $u_i$ is not a poised node,
	$u_{i+1}$ is chosen as any of its children $u_i'$ such that $\Gamma(u_i')$
	satisfies the invariant. It is easy to check that at least one such child
	exists by construction because of how expansion rules are defined and the fact
	that $\Pi$ is a pre-model. If, otherwise, $u_i$ is a poised node, then it has
	$\delta_{z.\phi}$ children $\seq{u^S_0,\ldots, u^S_{\delta_{z.\phi}}}$ created
	by the \Rule{Step} rule, and potentially other children $\seq{u^Y_0,\ldots,
	u^Y_n}$ added by failed instances of the \Rule{Yesterday} rule. If there is
	any $u^Y_{i}$ whose label satisfies the invariant, then one of those is
	selected as $u_{i+1}$. If no such child exists, $u_{i+1}$ is chosen according
	to the timestamp of the next atom in the pre-model, \ie
	$u_{i+1}=u^S_{\delta_{J(i+1)}}$. The invariant in this case is satisfied by
	construction because of the definition of the \Rule{Step} rule.\fitpar

	Since each step always descends down the tree, we will eventually reach a leaf
	$u_n$. We now show that $u_n$ has to be a ticked leaf. If instead
	$u_n$ was crossed, it could not have been crossed by contradiction, because there would
	be some $p$ and $\neg p$ in $\Delta(u_n)$ that would imply
	that $p\in\Delta_{J(n)}$ and $\neg p\in\Delta_{J(n)}$, which cannot be the
	case. Similarly, it could not have been crossed by the \Rule{Synch} rule. Furthermore, the \Rule{Loop_1} rule could not have crossed $u_n$, because the
	timestamps were chosen following $\bar\iota$, which by definition satisfies the
	\emph{progress} and \emph{monotonicity} conditions. Then, $u_n$ has to have been
	crossed by the \Rule{Prune} rule, hence there exist other two nodes $u_m$
	and $u_r$ such that $\Gamma(u_m)=\Gamma(u_r)=\Gamma(u_n)$ and all the
	eventualities requested in $u_m$ and fulfilled between $u_r$ and $u_n$ are
	also fulfilled between $u_m$ and $u_r$, and
	$\Delta_{J(m)}=\Delta_{J(r)}=\Delta_{J(n)}$. Now, it
	can be checked that the pre-model $\Pi'$
	obtained by removing all the atoms between $\Delta_{J(r)+1}$ and $\Delta_{J(n)}$ is still a pre-model for
	$z.\phi$. Then, we show that $\Pi'\preceq\Pi$, leading to a contradiction,
	since we supposed that $\Pi$ was greedy.\fitpar

	We proceed by showing that $\delay'_i\prec \delay_i$, while $\delay'_n\preceq
	\delay_n$ for all $n < i$, thus implying that $\Pi' \prec \Pi$. To
	this end, we need to show that there is at least one $\X$-eventuality
	$x.\psi$ for which
	$\delay'_i(x.\psi)<\delay_i(x.\psi)$ while the other
	values of the delay vector for the other eventualities remains constant.
  First, consider an eventuality $x.\psi$ which is requested in
  $\Delta_i$, but not in $\Delta_j$. Then, it holds that its first fulfilment happens
  before $\Delta_j$ and the cut between $\Delta_j$ and $\Delta_k$ cannot change
  its delay. Now, suppose $x.\psi$ is requested in $\Delta_i$
  and $\Delta_j$ and is fulfilled between $\Delta_j$ and $\Delta_k$. Hence, by
  definition of PRUNE rule, it is also fulfilled between $\Delta_i$ and
  $\Delta_j$, thus again its first fulfilment after $\Delta_i$ is before
  $\Delta_j$, and the cut does not change its delay. The remaining case is that
  of $x.\psi$ being requested in $\Delta_i$ and $\Delta_j$ but not
  fulfilled between them, and thus neither between $\Delta_j$ and $\Delta_k$. At
  least one eventuality of this kind is required to exist by the definition of
  \Rule{Prune} rule. Then, since $x.\psi$ is not fulfilled before
  $\Delta_k$, it must be requested there, and fulfilled later, and the cut
  between $\Delta_j$ and $\Delta_k$ will decrease the value of
  $\delay'_i(x.\psi)$. Thus $\delay'_i\prec \delay_i$. Now,
  consider any position $n < i$. In any of those positions, for any eventuality
  $x.\psi$, $\delay_n(x.\psi)$ cannot increase
  because of the cut, otherwise the first fulfilment of $x.\psi$
  would have been between $\Delta_j$ and $\Delta_k$, which cannot be the case
  because all the eventualities fulfilled there
	are fulfilled also before,
  between $\Delta_i$ and $\Delta_k$. Hence $d'_n=d_n$ for all $n<i$, and thus
	$\Pi'\prec\Pi$.
\end{proof}


\section{Conclusions}
\label{sec:conclusions}

In this paper, we developed one-pass and tree-shaped tableau systems for \TPTL and
\TPTLbP. They extend those for \LTL~\cite{Reynolds16a} and \LTLP~\cite{GiganteMR17}
with the ability of dealing with freeze quantifiers and timing constraints. Notably,
the \Rule{Prune} rule, which was the main novelty of the one-pass and tree-shaped
tableau system for \LTL, did not need to be changed at all to
work in the new systems. This confirms the great
extensibility of this tableau system. The completeness of the \Rule{Prune} rule
has been proved here with a new model-theoretic argument, much simpler than those used in the
proofs for \LTL and \LTLP. Whether or not such a tableau system can be
extended to support full \TPTLP is still an open problem.

\bibliographystyle{eptcs}
\bibliography{biblio}
\end{document}